\documentclass[submission,copyright,creativecommons]{eptcs}
\providecommand{\event}{AiML 2026} 

\usepackage{aiml26}

\usepackage{iftex}
\usepackage{stmaryrd}

\ifpdf
  \usepackage{underscore}         
  \usepackage[T1]{fontenc}        
\else
  \usepackage{breakurl}           
\fi

\usepackage{graphicx}
\usepackage{tikz}
\usepackage{subfigure}
\usepackage{makeidx}
\usepackage{pgf}
\usepackage{gastex}
\usepackage{adjustbox}

\usetikzlibrary{shapes.misc}

\tikzset{cross/.style={cross out, draw=black, minimum size=2*(#1-\pgflinewidth), inner sep=0pt, outer sep=0pt},
cross/.default={3pt}}

\usetikzlibrary{arrows,
	automata,
	calc,
	decorations,
	decorations.pathreplacing,
	positioning,
	shapes}

\usepackage{amsthm}
\usepackage{amsmath}
\usepackage{amsfonts}
\usepackage{amssymb}
\usepackage{thmtools} 
\usepackage{thm-restate}

\theoremstyle{plain}
\newtheorem{theorem}{Theorem}[section]

\newtheorem{proposition}[theorem]{Proposition}
\newtheorem{lemma}[theorem]{Lemma}
\newtheorem{corollary}[theorem]{Corollary}
\theoremstyle{definition}

\newtheorem{definition}[theorem]{Definition}

\newtheorem{remark}[theorem]{Remark}
\newtheorem{example}[theorem]{Example}

\newif\ifsolution

\newcommand{\AP}{\mathsf{AP}}

\newcommand{\NF}{\mathfrak{NF}}
\newcommand{\KF}{\mathfrak{KF}}

\newcommand{\Log}{\mathit{Log}}

\newcommand{\TNL}{\mathsf{T}_3\mathsf{m}}
\newcommand{\DNL}{\mathsf{D}_3\mathsf{m}}

\newcommand{\Nr}{{N_\omega^{r}}}
\newcommand{\Ni}{{N_\omega^{i}}}
\newcommand{\Tr}{T_\omega^{r}}
\newcommand{\Tsc}{T_{\omega,\omega,\omega}^{r,\mathsf{m}}}
\newcommand{\Ti}{T_\omega^{i}}
\newcommand{\Tim}{T_{\omega,\omega,\omega}^{i,\mathsf{m}}}

\newcommand{\forget}{\mathit{forget}}

\title{On Modal Logics of Full Products of Neighborhood Frames}

\author{
Rajab Aghamov$^{1}$,
Andrey Kudinov$^{2,3}$,
Maik Thanh Nguyen$^{1}$,
Jakob Piribauer$^{1}$\\[1ex]
$^{1}$Technische Universit\"{a}t Dresden, Dresden, Germany\\
$^{2}$Higher School of Modern Mathematics, MIPT, Moscow, Russia\\
$^{3}$HSE University, Moscow, Russia
}

\newcommand{\titlerunning}{On Modal Logics of Full Products of Neighborhood Frames}
\newcommand{\authorrunning}{R. Aghamov, A. Kudinov, M. T. Nguyen, J. Piribauer}
\newcommand{\copyrightholders}{Aghamov, Kudinov, Nguyen \& Piribauer}
\definecolor{wred}{RGB}{150,30,30}   
\definecolor{mblue}{rgb}{0.0, 0.53, 0.74}
\definecolor{sgreen}{rgb}{0.53, 0.66, 0.42}

\begin{document}

\maketitle

\begin{abstract}
\textbf{Abstract.}  On the product of two neighborhood frames, three natural neighborhood functions can be defined:
the horizontal one assigning to a point $(x,y)$ the set of all supersets of sets $U\times\{y\}$, where $U$ is a neighborhood of $x$;
the vertical analog; and
the product neighborhood function assigning as neighborhoods  all supersets of sets $U\times V$, for neighborhoods $U$ of $x$ and $V$ of $y$.

We define the tri-modal logics $\mathsf{T}\times_n^+ \mathsf{T}$ and $\mathsf{D}\times_n^+ \mathsf{D}$ of classes of full products equipped with all three neighborhood functions of neighborhood frames validating the logic $\mathsf{T}$ or $\mathsf{D}$; thereby extending known product results for $\mathsf{S4}$ and $\mathsf{D4}$ to weaker systems.
Two interaction principles arise: $(\mathsf{sub})=\Box p \rightarrow \Box_1 p \land \Box_2 p$ and $(\mathsf{mix})=\Box p \rightarrow \Box_1\Box_2 p \land \Box_2\Box_1 p$, where $\Box$ for the product neighborhood function and $\Box_1,\Box_2$ the horizontal and vertical ones.
Namely, we show that
$\mathsf{T}\times_n^+ \mathsf{T} = \mathsf{T}\otimes \mathsf{T}\otimes \mathsf{T}  + (\mathsf{mix})$
and
$\mathsf{D}\times_n^+ \mathsf{D} = \mathsf{D}\otimes \mathsf{D}\otimes \mathsf{D} + (\mathsf{mix})$,
where $\otimes$ denotes fusion.
Notably, $(\mathsf{sub})$ and $(\mathsf{mix})$ are equivalent over $\mathsf{S4}\otimes\mathsf{S4}\otimes\mathsf{S4}$ and thus  $\mathsf{S4}\otimes\mathsf{S4}\otimes\mathsf{S4}+(\mathsf{mix})$ axiomatizes the logic of full products of topological spaces.

\end{abstract}

\paragraph*{Acknowledgments.}
The authors were supported by DFG grant 389792660
as part of TRR 248 (see https://perspicuous-computing.science) and by the BMBF (Federal Ministry of Education and
Research) in DAAD project 57616814 (SECAI, School of
Embedded and Composite AI) as part of the program Konrad Zuse Schools of Excellence in Artificial Intelligence.

\section{Introduction}
Neighborhood semantics generalizes relational Kripke semantics and was introduced in the 1970s by Scott \cite{Scott} and Montague \cite{mont} to study weak, in particular non-normal, modal logics.
A neighborhood model $M=(X,\tau,V)$ consists of a domain $X$, a function $\tau\colon X\to 2^{2^X}$ assigning to each $x\in X$ a set of neighborhoods $\tau(x)\subseteq 2^X$, and a valuation $V\colon \mathsf{AP}\to 2^X$. 
In this work, each $\tau(x)$ is required to be a filter on $X$, i.e., a non-empty family of subsets of $X$ closed under supersets and finite intersections, which ensures normality of the resulting logics \cite{Pacuit}. 
The modal clause is given by $M,x\vDash\Box\varphi$ iff $\llbracket\varphi\rrbracket^M\in\tau(x)$, where $\llbracket\varphi\rrbracket^M=\{y\in X\mid M,y\vDash\varphi\}$.


Neighborhood semantics is conceptually similar to topological semantics. 
There are two standard interpretations: the diamond-as-closure (equivalently, box-as-interior) semantics and the diamond-as-derivation semantics; both can be viewed as special cases of neighborhood semantics. 
Given a topological space $(Y,\sigma)$, the closure interpretation assigns to each $y\in Y$ the family of all sets containing some open neighborhood of $y$. 
This approach was initiated by McKinsey and Tarski \cite{mctarski} in 1944. 
In the derivational interpretation, the neighborhood function induced by a topological space assigns to each $y\in Y$ the set of all supersets of  $U\setminus\{y\}$, where $U$ is an open neighborhood of $y$; this semantics was later studied by Esakia \cite{Esakia} and Shehtman \cite{vbsder}. 
Unless stated otherwise, “topological semantics” refers to the closure interpretation in this paper.


In topological semantics, the axioms 
$\mathsf{(T)}=\Box p\to p$ and 
$\mathsf{(4)}=\Box p\to\Box\Box p$ 
are valid. Hence, the logic of any class of topological spaces subsumes $\mathsf{S4}$, the smallest normal modal logic containing $\mathsf{(T)}$ and $\mathsf{(4)}$, also denoted $\mathsf{K}+\mathsf{(T)}+\mathsf{(4)}$. 
In contrast, in neighborhood semantics both axioms can be falsified, making this framework suitable for the study of logics below $\mathsf{S4}$. In this paper, in particular the following logics are of interest, where $\mathsf{(D)}=\Box p\to\Diamond p$ is the seriality axiom:
\[
\mathsf{D}=\mathsf{K}+\mathsf{(D)}, \qquad 
\mathsf{D4}=\mathsf{K}+\mathsf{(D)}+\mathsf{(4)}, \qquad 
\mathsf{T}=\mathsf{K}+\mathsf{(T)}.
\]
Since $\mathsf{(D)}$ is derivable in $\mathsf{T}$, all these logics contain $\mathsf{(D)}$; we therefore refer to them as \emph{serial} logics.


\vspace{3pt}
\noindent\textbf{Products of frames.}
When studying multi-modal systems whose modalities refer to different dimensions, such as space and time, or to different agents, it is often necessary to combine frames for uni-modal logics into a single multi-modal structure. 
A natural way to do so is via a product construction.
Also from the perspective of epistemic logic, for instance, combining models for several agents may be achieved by taking the product of their underlying frames.

For Kripke frames, the resulting product is given by the Cartesian product of the underlying frames equipped with two accessibility relations. Such relational products were first studied by Segerberg~\cite{Segerberg1973} and Shehtman~\cite{vbs1978}. They validate interaction principles such as commutation $\mathsf{(com)}$ 
and convergence known as the Church-Rosser principle $\mathsf{(chr)}$
\[
\mathsf{(com)}=\Box_1\Box_2 p \leftrightarrow \Box_2\Box_1 p
\qquad\text{and}\qquad
\mathsf{(chr)}=\Diamond_1\Box_2 p \to \Box_2\Diamond_1 p.
\]

In application domains such as epistemic logic, however, these interaction principles are often not desirable.
\emph{Topological semantics} provides a way to weaken this built-in interaction: Van Benthem et al.~\cite{vanbenguram} introduce on $X\times Y$ the \emph{horizontal} and \emph{vertical} topologies obtained from the component topologies by fixing one coordinate.
These topologies provide a direct analogue of the Kripke product. In neighborhood semantics,  analogous products of neighborhood frames resulting in a frame with a \emph{vertical} and \emph{horizontal} neighborhood function were introduced  by Sano in \cite{sano}.
The interaction axioms $\mathsf{(com)}$ and $\mathsf{(chr)}$ may fail in products of topological spaces and consequently in the more general products of neighborhood frames. So, topological products and neighborhood products allow one to combine two information dimensions without enforcing interaction principles between the corresponding modalities a priori. 

\begin{example}[\cite{vanbenguram}]
We illustrate how axioms like $\mathsf{(com)}$ can be falsified in topological semantics. As topological spaces can be viewed as neighborhood frames 
and the product construction coincides in both views, this also shows how $\mathsf{(com)}$ can be falsified in neighborhood semantics.

\begin{figure}
	\centering
\begin{tikzpicture}[scale=2]

	\draw[semithick] (-1.3,0) -- (1.3,0);
	\draw[semithick] (0,-0.7) -- (0,0.7);
	\draw[dashed] (0.02,-0.7) -- (0.02,0.7);

	\draw[dashed] (0,0) -- (1,0.4);
	\draw[dashed] (0,0) -- (1,-0.4);
	
	\draw[dashed] (0,0) -- (-1,0.4);
	\draw[dashed] (0,0) -- (-1,-0.4);
	
	\foreach \x in {0.1,0.2,...,0.9}{
		\draw[dotted] (\x,{-0.4*\x}) -- (\x,{0.4*\x});
	}
	\foreach \x in {-0.1,-0.2,...,-0.9}{
		\draw[dotted] (\x,{0.4*abs(\x)}) -- (\x,{0.4*-abs(\x)});
	}

    \node[text=wred] at (0.17,-0.2) {\scriptsize $(0,0)$};
    \fill[wred] (0,0) circle (0.6pt);

	\node at (0.6,0.4) {\scriptsize $p$};

\end{tikzpicture}
\caption{Regions where proposition $p$ is true.}
\label{fig:counter}
\end{figure}

Consider the product of $\mathbb{R}$ with the standard topology $\sigma$ with itself, resulting in
the product $X = (\mathbb{R}\times\mathbb{R},\sigma_1,\sigma_2)$ where the vertical topology $\sigma_1$ contains all sets of the form $U\times \{y\}$ for an open set $U\in \sigma$
and $y\in \mathbb{R}$ and the horizontal topology $\sigma_2$ similarly contains all sets of the form $\{x\}\times U$.
Now, consider the valuation 
$
V(p) = \{(x,y)\in \mathbb{R}\times\mathbb{R} \mid |y|<|x|\} \cup \{0\}\times \mathbb{R}$.
The region where $p$ holds is depicted in Figure~\ref{fig:counter}.
Then, $X,(0,0)\vDash \Box_1\Box_2 p$: For any point $(x,0)$ in the set $ (-1,1)\times\{0\}\in \sigma_1$, we have $X,(x,0)\vDash \Box_2 p$
because, for $x\not=0$, the set $\{x\}\times (-|x|,|x|)$ is an open neighborhood of $(x,0)$ in the horizontal topology $\sigma_2$ contained in $V(p)$ 
and for $x=0$, the set $\{0\}\times (-1,1)$ has these properties.
On the other hand $X,(0,0)\not\vDash \Box_2\Box_1 p$: Any open neighborhood in $\sigma_2$ of $(0,0)$ contains a point $(0,y)$ with $y\not=0$.
The point $(0,y)$ does not have a vertical neighborhood in $\sigma_1$ contained in $V(p)$.
\end{example}

\vspace{3pt}
\noindent\textbf{Full products.}
In the same work \cite{vanbenguram}, the authors also study the standard product topology  together with the horizontal and vertical topologies, yielding a \emph{full topological product} with three topologies. The modal logic of these products then uses $\Box_1$ and $\Box_2$ for the vertical and horizontal topologies and $\Box$ for the standard product topology.
 In these full products, the axiom $(\mathsf{sub})$ expressing that standard neighborhoods of  a state contain some vertical and horizontal neighborhoods as subsets
 is valid:
 \begin{align*}
		(\mathsf{sub}) & \:=\: \Box p \rightarrow \Box_1 p \land \Box_2 p.
\end{align*}
In Kripke semantics, the analogous full product has three accessibility relations: Namely, the horizontal and vertical accessibility relations $R_1$ and $R_2$ acting only in one of the two dimensions
and the \emph{synchronous} relation $R$ that corresponds to taking a step in both dimensions, i.e., $R=R_1\circ R_2 = R_2\circ R_1$.
This strong connection to $R_1$ and $R_2$ leads to the fact that $\Box  p \leftrightarrow \Box_1\Box_2 p$ is valid (and $\Box  p \leftrightarrow \Box_2\Box_1 p$ then follows by the validity of $(\mathsf{com})$) and hence $\Box$ can be seen as pure syntactic sugar in this setting.
Note also that, as shown in Example~\ref{ex:sub-failure} below,
$\mathsf{(sub)}$ is not valid in general on full products of Kripke frames. For reflexive Kripke frames, however, $(\mathsf{sub})$ is valid on their full products.

\begin{example}
\label{ex:sub-failure}
Let $F_1$ and $F_2$ be the Kripke frames depicted in Figure \ref{fig:subnotvalid} and consider their full product $F_1\times^+ F_2$ with  horizontal, vertical and synchronous transition relations depicted in Figure \ref{fig:subnotvalid}  as well.
Define $V(p)=\{(y,v)\}$.
Then $F_1 \times^+ F_2,(x,u) \vDash \Box p$, but
$F_1 \times^+ F_2,(x,u) \nvDash \Box_1 p$,
so $\mathsf{(sub)}$ fails.
 \end{example}

	\begin{figure}
		\centering
	\begin{tikzpicture}[>=stealth, node distance=2cm, scale = 0.7]
		
		\node (x) at (-4,1.6) {\footnotesize$x$};
		\node (y) at (-2,1.6) {\footnotesize$y$};
		\draw[thick,->,sgreen] (x) to[bend left=12] (y);
		\draw[thick,->,sgreen] (y) to[bend left=12] (x);
		\node at (-3,2) {\footnotesize$F_1$};

		\node (u) at (-4,0.3) {\footnotesize$u$};
		\node (v) at (-2,0.3) {\footnotesize$v$};
        \draw[thick,->,wred] (u) to[bend left=12] (v);
		\draw[thick,->,wred] (v) to[bend left=12] (u);
		\node at (-3,0.8) {\footnotesize$F_2$};

		\node at (3.4,2.7) {\footnotesize$F_1 \times F_2$};
		
		\node (xu) at (0.5,2) {\footnotesize$(x,u)$};
		\node (xv) at (6.5,2) {\footnotesize$(x,v)$};
		\node (yu) at (0.5,0) {\footnotesize$(y,u)$};
		\node (yv) at (6.5,0) {\footnotesize$(y,v)$};
		
		\draw[thick,->,wred] (xu) to[bend left=7] node[below=-2.5pt, black] {\footnotesize $R_2$} (xv);
		\draw[thick,->,wred] (xv) to[bend left=7]  (xu);
		\draw[thick,->,wred] (yu) to[bend left=7]  (yv);
		\draw[thick,->,wred] (yv) to[bend left=7] node[above = -3pt, black] {\footnotesize$R_2$} (yu);
		
		\draw[thick,->,sgreen] (xu) to[bend left=10]  (yu);
		\draw[thick,->,sgreen] (yu) to[bend left=10] node[left, black] {\footnotesize$R_1$} (xu);
		\draw[thick,->,sgreen] (xv) to[bend left=10] node[right, black] {\footnotesize$R_1$} (yv);
		\draw[thick,->,sgreen] (yv) to[bend left=10]  (xv);
		
        \draw[dashed,->, thick]
        (xu) to[bend left=3]
        node[midway, above, yshift=-10pt, xshift=27pt] {\footnotesize $R$}
        (yv);
                \draw[dashed,->, thick] (yv) to[bend left=3]  (xu);

        \draw[dashed,->, thick]
        (xv) to[bend left=3](yu);
        \draw[dashed,->, thick] (yu) to[bend left=3]  (xv);
		
	\end{tikzpicture}
\caption{Frames $F_1$ and $F_2$ and their Kripke product.}
\label{fig:subnotvalid}

	\end{figure}

In this paper, we consider the analogous full products of neighborhood frames
equipped with vertical, horizontal, and the product neighborhood functions.
In contrast to the topological setting, $\mathsf{(sub)}$ can be falsified in this framework.
Indeed, Example~\ref{ex:sub-failure} transfers to neighborhood semantics:
passing to the associated neighborhood frames $N(F_1)$ and $ N(F_2)$ (see Section~\ref{par:validity})
preserves validity of modal formulas and  yields a counterexample
to $\mathsf{(sub)}$ for full products of neighborhood frames, too.

\vspace{3pt}
\noindent\textbf{Logics of products.}
We have already stated several axioms that are valid on different kinds of products of frames above. 
A naturally arising question is  what the precise logics of products of frames are in the different settings.
As a first ingredient, we need the \emph{fusion} of logics.
Given uni-modal logics $\Lambda_1$ and $\Lambda_2$, their fusion $\Lambda_1\otimes \Lambda_2$ is the smallest normal bi-modal logic containing
$\Lambda_1$ where $\Box$ has been renamed to $\Box_1$ and $\Lambda_2$ where $\Box$ has been renamed to $\Box_2$.
For three logics, $\Lambda_1$, $\Lambda_2$, and $\Lambda_3$, we keep the use of $\Box$ as it is in $\Lambda_3$, which will refer to the standard topology or product neighborhood function
in the analogously defined fusion $\Lambda_1\otimes \Lambda_2 \otimes \Lambda_3$.

Given modal logics $\Lambda_1$ and $\Lambda_2$, we denote by
$\Lambda_1 \times \Lambda_2$ the logic of the class of Kripke frames
$\{\, F_1 \times F_2 \mid F_1 \vDash \Lambda_1 \text{ and } F_2 \vDash \Lambda_2 \,\}$,
and by $\Lambda_1 \times^+ \Lambda_2$ the logic of the corresponding
full products of such frames.
Likewise, we write $\Lambda_1\times_t \Lambda_2$, $\Lambda_1\times_t^+ \Lambda_2$, $\Lambda_1\times_n \Lambda_2$, and 
$\Lambda_1\times_n^+ \Lambda_2$ for the logic of (full) products of topological spaces (t) or neighborhood frames (n) satisfying $\Lambda_1$ and $\Lambda_2$, respectively.

The first fundamental result on products in Kripke semantics, due to \cite{vbsproduct}, states that
\begin{align*}
\mathsf{K}\times \mathsf{K}
=
\mathsf{K}\otimes\mathsf{K}
+ (\mathsf{com})
+ (\mathsf{chr}).
\end{align*}
Moreover, for full products the additional modality is characterized by
$
\Box p \leftrightarrow \Box_1\Box_2 p$,
so the logic of a  full product of Kripke frames simply extends  $\mathsf{K}\otimes\mathsf{K}\otimes\mathsf{K}
+ (\mathsf{com})
+ (\mathsf{chr})$ with $\Box p \leftrightarrow \Box_1\Box_2p$, which {describes} $\Box$ in terms of $\Box_1$ and $\Box_2$.

As mentioned before, topological spaces always validate $\mathsf{S4}$.
The key results shown in \cite{vanbenguram} are as follows:
$
\mathsf{S4}\times_t \mathsf{S4}
=
\mathsf{S4}\otimes \mathsf{S4}$ and 
$
\mathsf{S4}\times_t^+ \mathsf{S4}
=
\mathsf{S4}\otimes \mathsf{S4}\otimes \mathsf{S4}
+ (\mathsf{sub}).
$
Since topological spaces form a special class of neighborhood frames, we have $\mathsf{S4}\times_n \mathsf{S4} = \mathsf{S4}\times_t \mathsf{S4}$ and $\mathsf{S4}\times_n^+ \mathsf{S4} = \mathsf{S4}\times_t^+ \mathsf{S4}$.
Thus,  
\begin{center}
$
\mathsf{S4}\times_n \mathsf{S4}
=
\mathsf{S4}\otimes \mathsf{S4},
\qquad
\mathsf{S4}\times_n^+ \mathsf{S4}
=
\mathsf{S4}\otimes \mathsf{S4}\otimes \mathsf{S4}
+ (\mathsf{sub}).$
\end{center}
The logic of full products in derivational semantics was  investigated for frames satisfying $\mathsf{D4}$.
The main results of \cite{KudDS} are
$
\mathsf{D4}\times_d \mathsf{D4}
=
\mathsf{D4}\otimes \mathsf{D4}$ and $
\mathsf{D4}\times_d^+ \mathsf{D4}
=
\mathsf{D4}\otimes \mathsf{D4}\otimes \mathsf{D4}
+ (\mathsf{sub})$.
These equalities extend analogously to neighborhood products:
\begin{center}$
\mathsf{D4}\times_n \mathsf{D4}
=
\mathsf{D4}\otimes \mathsf{D4},
\qquad
\mathsf{D4}\times_n^+ \mathsf{D4}
=
\mathsf{D4}\otimes \mathsf{D4}\otimes \mathsf{D4}
+ (\mathsf{sub}).$
\end{center}
Further, for
$L_1, L_2 \in \{\mathsf{S4}, \mathsf{D4}, \mathsf{D}, \mathsf{T}\}$,
it was shown in \cite{Kud2012} that
$
L_1 \times_n L_2
=
L_1 \otimes L_2$.
In this paper, we aim to extend this picture on neighborhood products of modal logics by determining the full neighborhood products 
$\mathsf{T}\times_n^+ \mathsf{T}$ and
$\mathsf{D}\times_n^+ \mathsf{D}$.

Finally, the following result generalizes some of the results stated before:
A normal logic $L$ is called an HTC-logic (Horn preTransitive Closed logic) if it is axiomatizable by closed formulas (i.e., formulas without propositional variables) together with formulas of the form
$\Box p \to \Box^{\,n} p$, $n \ge 0$.
For HTC-logics $L_1$ and $L_2$ one has 
$
L_1 \times_n L_2
=
L_1 \otimes L_2 + \Delta$,
where
$
\Delta =
\{\varphi \to \Box_2 \varphi \mid \varphi \text{ closed and } \Box_2\text{-free}\}
\cup
\{\varphi \to \Box_1 \varphi \mid \varphi \text{ closed and } \Box_1\text{-free}\}$ \cite{Kudhorn}.
If, moreover, $L_1$ and $L_2$ are serial, then 
$
L_1 \times_n L_2
=
L_1 \otimes L_2$ \cite{Kudhorn}.
The previously stated results for
$\mathsf{S4}, \mathsf{D4}, \mathsf{D}$, and $\mathsf{T}$
are special cases of this result.

\vspace{3pt}
\noindent\textbf{Interaction axioms $(\mathsf{sub})$ and $(\mathsf{mix})$.}
While the axiom $(\mathsf{sub})$ introduced above arises naturally in the context of full topological products, its validity crucially relies on the reflexivity of the underlying frames.
If one aims to move below the $\mathsf{S4}$-setting by moving to neighborhood frames,
a further structural interaction principle arises:
\[
(\mathsf{mix}) = \Box p \to \Box_1\Box_2 p \land \Box_2\Box_1 p.
\]
Over $\mathsf{S4}\otimes\mathsf{S4}\otimes\mathsf{S4}$, the axioms
$
(\mathsf{sub})\;=\; \Box p \to \Box_1 p \land \Box_2 p$ and 
$(\mathsf{mix})$ are equivalent as shown by the following syntactic derivation (detailed definitions used here are postponed until Section \ref{sec:prelim}):

\begin{proposition}
\label{prop:submix}
Over $\mathsf{S4}\otimes\mathsf{S4}\otimes\mathsf{S4}$, the axioms
$
(\mathsf{sub})\;=\; \Box p \to \Box_1 p \land \Box_2 p$ and 
$(\mathsf{mix})\:=\: \Box p \to \Box_1\Box_2 p \land \Box_2\Box_1 p$ are equivalent.
\end{proposition}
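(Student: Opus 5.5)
We derive each axiom from the other inside $\mathsf{S4}\otimes\mathsf{S4}\otimes\mathsf{S4}$. For $\Box_1$ and $\Box_2$ we only need (T) and monotonicity; for $\Box$ we need (4).

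\emph{From $(\mathsf{sub})$ to $(\mathsf{mix})$.} We aim for $\Box p\to\Box_1\Box_2 p$; the conjunct with $\Box_2\Box_1 p$ is symmetric.
\begin{enumerate}
\item Axiom (4) for $\Box$ gives $\Box p\to\Box\Box p$.
\item The instance of $(\mathsf{sub})$ with $\Box p$ for $p$ gives $\Box\Box p\to\Box_1\Box p$, so $\Box p\to\Box_1\Box p$.
\item $(\mathsf{sub})$ itself gives $\Box p\to\Box_2 p$. By necessitation for $\Box_1$ and the K-axiom, this becomes $\Box_1\Box p\to\Box_1\Box_2 p$.
\item Chaining the implications yields $\Box p\to\Box_1\Box_2 p$.
\end{enumerate}
This direction really uses transitivity of $\Box$. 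That is expected, since $(\mathsf{mix})$ is strictly weaker than $(\mathsf{sub})$ without (4).

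\emph{From $(\mathsf{mix})$ to $(\mathsf{sub})$.} We aim for $\Box p\to\Box_1 p$; the case of $\Box_2$ is symmetric.
\begin{enumerate}
\item Axiom (T) for $\Box_2$ gives $\Box_2 p\to p$.
\item Monotonicity of $\Box_1$ (necessitation plus K) turns this into $\Box_1\Box_2 p\to\Box_1 p$.
\item Combining with the first conjunct of $(\mathsf{mix})$, $\Box p\to\Box_1\Box_2 p$, gives $\Box p\to\Box_1 p$.
\end{enumerate}
Using the second conjunct $\Box p\to\Box_2\Box_1 p$ together with $\Box_1 p\to p$ gives $\Box p\to\Box_2 p$ in the same way. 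This direction needs only reflexivity of $\Box_1$ and $\Box_2$.

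Neither direction has a real obstacle; the only points needing care are the two places where the specific $\mathsf{S4}$ axioms enter. The step using (4) for $\Box$ is the one to state carefully. We substitute into $(\mathsf{sub})$, which is legitimate because the logic is closed under uniform substitution, and we then apply monotonicity of $\Box_1$ via necessitation and the K-axiom. Since both directions are purely syntactic, the equivalence of the two extensions of $\mathsf{S4}\otimes\mathsf{S4}\otimes\mathsf{S4}$ follows.
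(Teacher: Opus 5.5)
Your proof is correct and follows the paper's argument. For $(\mathsf{sub})\Rightarrow(\mathsf{mix})$ you use the same chain $\Box p\to\Box\Box p\to\Box_1\Box p\to\Box_1\Box_2 p$, where the paper writes out the symmetric conjunct instead. For $(\mathsf{mix})\Rightarrow(\mathsf{sub})$ you apply $\mathsf{(T)}$ to the inner $\Box_2$ under $\Box_1$-monotonicity, whereas the paper strips the outer $\Box_1$ via $\Box_1\Box_2 p\to\Box_2 p$; this is an inessential difference.

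One side remark is backwards, though. Without $\mathsf{(4)}$, $(\mathsf{mix})$ is not weaker than $(\mathsf{sub})$; over $\mathsf{T}\otimes\mathsf{T}\otimes\mathsf{T}$ it is strictly stronger. Your second direction shows $(\mathsf{mix})$ yields $(\mathsf{sub})$, while a reflexive frame with $R=R_1\cup R_2$ and $R_1\circ R_2\not\subseteq R$ validates $(\mathsf{sub})$ and refutes $(\mathsf{mix})$. That is exactly why $\mathsf{(4)}$ is needed in the first direction.
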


   \begin{proof}
\emph{$(\mathsf{sub}) \Rightarrow (\mathsf{mix})$.}
Assume $(\mathsf{sub})$, i.e.\ $\Box p \to \Box_1 p$ and
$\Box p \to \Box_2 p$.
From $\Box p \to \Box_1 p$, by $\Box_2$-Necessitation and $(K_{\Box_2})$, we obtain
$
\Box_2\Box p \to \Box_2\Box_1 p.
$
From $\Box p \to \Box_2 p$, by substitution $ \Box p$ for $p$, we get
$\Box\Box p \to \Box_2\Box p$.
Using $\mathsf{(4)}$, we derive
$\Box p \to \Box_2\Box p$,
and hence
$
\Box p \to \Box_2\Box_1 p.
$
The derivation of $\Box p \to \Box_1\Box_2 p$ is symmetric.

\medskip
\noindent
\emph{$(\mathsf{mix}) \Rightarrow (\mathsf{sub})$.}
From $(\mathsf{mix})$ we have $\Box p \to \Box_1\Box_2 p$.
By reflexivity $\Box_1 q \to q$, substituting $ \Box_2 p$ for $q$, we obtain
$\Box_1\Box_2 p \to \Box_2 p$,
and hence $\Box p \to \Box_2 p$.
The implication $\Box p \to \Box_1 p$ is obtained symmetrically.
 \end{proof}

So, in the context of topological semantics, both axioms may be used interchangeably. 
Moreover, the derivation of $(\mathsf{sub})$ from $(\mathsf{mix})$ only use $\mathsf{(T)}$ and so in fact $(\mathsf{sub})\in \mathsf{T}\otimes \mathsf{T}\otimes \mathsf{T} +  (\mathsf{mix})$, while the derivation of $(\mathsf{mix})$ from $(\mathsf{sub})$ only relies on $\mathsf{(4)}$ and so $(\mathsf{mix})\in \mathsf{K4}\otimes \mathsf{K4}\otimes \mathsf{K4} +  (\mathsf{sub})$. 

\paragraph*{Contributions.}
Using neighborhood semantics  to leave the realm of $\mathsf{S4}$-spaces, we  investigate full neighborhood products for the modal logics  $\mathsf{T} = \mathsf{K} + \Box p \to p$ and $\mathsf{D}=  \mathsf{K} + \Box p \to \Diamond p$, extending the
topological $\mathsf{S4}$-case to genuinely weaker systems. 
Our main result is:
\begin{restatable}[Main result]{theorem}{mainresult}
$
\mathsf{T}\times_n^+ \mathsf{T} = \mathsf{T}\otimes \mathsf{T}\otimes \mathsf{T}  + (\mathsf{mix})$ and $
 \mathsf{D}\times_n^+ \mathsf{D} =  \mathsf{D}\otimes \mathsf{D}\otimes \mathsf{D} + (\mathsf{mix}) 
$.
\end{restatable}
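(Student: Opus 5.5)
We split the proof into soundness and completeness, treating $\mathsf{T}$ and $\mathsf{D}$ in parallel.

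\emph{Soundness.} Let $(X,\tau_1)$ and $(Y,\tau_2)$ be filter neighborhood frames validating $\mathsf{T}$ (resp.\ $\mathsf{D}$). On $X\times Y$ the horizontal, vertical and product neighborhood functions are again filters. Each of them inherits reflexivity (every neighborhood contains the point) or non-emptiness of neighborhoods from the factors. Hence each modality validates $\mathsf{T}$ (resp.\ $\mathsf{D}$).

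For $(\mathsf{mix})$, suppose $\llbracket p\rrbracket\supseteq U\times V$ with $U\in\tau_1(x)$ and $V\in\tau_2(y)$. For every $x'\in U$, the set $\{x'\}\times V$ is a vertical neighborhood of $(x',y)$ contained in $\llbracket p\rrbracket$. So $\Box_2 p$ holds on $U\times\{y\}$, which gives $\Box_1\Box_2 p$ at $(x,y)$. The argument for $\Box_2\Box_1 p$ is symmetric.

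\emph{Completeness.} We aim to show that every formula $\varphi\notin L:=\mathsf{T}\otimes\mathsf{T}\otimes\mathsf{T}+(\mathsf{mix})$ fails on some full product. The plan has three steps.

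First, show that $L$ is Kripke complete with respect to frames $(W,R_1,R_2,R)$ in which $R_1,R_2,R$ are reflexive and $R\supseteq R_1\circ R_2\cup R_2\circ R_1$. This is the first-order correspondent of $(\mathsf{mix})$: $(\mathsf{mix})$ is a Sahlqvist formula, so canonicity applies. By unravelling we may further take a tree-like frame, in which every point has a unique history of $R_1$-, $R_2$- and $R$-steps, and which is still a model of $L$ refuting $\varphi$. For $\mathsf{D}$ the same works with seriality in place of reflexivity.

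Second, build neighborhood frames $N_1$ and $N_2$, together with a surjective map $f\colon X\times Y\to W$, such that the three product neighborhood functions map onto the relational structure. Concretely, the image of each $\tau_i$-neighborhood family at $(x,y)$ should generate the $R_i$-successor filter of $f(x,y)$, and product neighborhoods should generate the $R$-successor filter. The validity-preserving transfer from Kripke frames to neighborhood frames recalled in Section~\ref{par:validity} then lets us pull $\varphi$'s refutation back.

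Following the technique of \cite{Kud2012,KudDS,vanbenguram}, we take $X=Y$ to be a space of infinite sequences, for example $\omega^\omega$-like trees such as $\Nr$, $\Tr$ or $\Tsc$, with neighborhoods given by "eventually in a cone" filters. A point $(x,y)$ encodes a path in the tree-like Kripke frame by reading off, coordinate by coordinate, which modality was used. Neighborhoods $U\times V$ then capture simultaneous moves. Any product neighborhood contains a point whose image is an arbitrary $R$-successor; here we use $R\supseteq R_1\circ R_2$ so that such points are consistent with the horizontal and vertical moves, and the filter structure guarantees that $\Box$ sees exactly the $R$-successors.

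\emph{Main obstacle.} The hard part is the second step: making sure that $f$ is a bounded-morphism-like map simultaneously for all three neighborhood functions. The product function must not force $(\mathsf{sub})$. Note that $(\mathsf{sub})$ is not in $L$ over $\mathsf{D}$, and even over $\mathsf{T}$ it holds only as a consequence of $(\mathsf{mix})$. This means horizontal neighborhoods should \emph{not} be contained in product ones, so the construction must keep the three filters genuinely independent, which rules out naive product choices.

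I would first try to handle this with a labelling in which each coordinate sequence records infinitely many "fresh" branching choices. Each neighborhood then constrains only a tail of the sequence, and any $R$-step, $R_1$-step or $R_2$-step can be realized by modifying the tails independently. The reflexive case is then obtained by adding the current point to each neighborhood.
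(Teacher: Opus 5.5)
\textbf{Verdict: genuine gap.} Your soundness half is correct and is essentially the paper's lower-bound argument. The gap is in completeness. Its only nontrivial part, your second step, is never carried out. You do not define the factor frames, their filters, or the map $f$, and you verify neither bounded-morphism condition. The sentence ``the filter structure guarantees that $\Box$ sees exactly the $R$-successors'' is exactly what has to be proved.

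Worse, the construction you gesture at fails as described. Suppose, as ``eventually in a cone'' and ``modifying the tails'' suggest, that a neighborhood of a sequence is the set of all sequences sharing a long enough prefix with it. Then in the $\mathsf{T}$ case each factor is a topological space. The full product then validates $\mathsf{S4}\otimes\mathsf{S4}\otimes\mathsf{S4}+(\mathsf{sub})$, in particular $\Box p\to\Box\Box p$, which is not in $\TNL$ ($\Tsc$ refutes it).

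The missing idea is the paper's frame $\Nr$ (resp.\ $\Ni$). Its points are pseudo-infinite sequences. The set $U_k(\alpha)$ contains only those $\beta$ such that $\forget(\beta)$ appends at most one (resp.\ exactly one) letter to $\forget(\alpha)$ and $\beta$ agrees with $\alpha$ up to position $\max(k,st(\alpha))$. Each restriction does specific work:
\begin{itemize}
\item The one-letter restriction kills $\mathsf{(4)}$.
\item Pushing the new letter arbitrarily far out makes the filters non-principal.
\item In the product, it lets you take $m>\max\{st(\alpha),st(\beta)\}$, so that both new letters land where both coordinates are $\star$.
\end{itemize}
The map then reads a pair position-wise via $f(a,b)$ with a pairing bijection $h$. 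Aligned new letters give a direct child $(h(a,b),0)$, misaligned ones an $R_1R_2$ or $R_2R_1$ composite, and a single new letter an $R_1$- or $R_2$-child. This yields the key identity $R(g(\alpha,\beta))=g(U_m(\alpha)\times U_m(\beta))$, which gives both bounded-morphism conditions at once. Your remark that misaligned moves need $R\supseteq R_1\circ R_2$ points this way, but nothing is made precise.

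Your first step also needs repair. A plain unravelling is not an $L$-frame. If $sR_1qR_2t$ in the unravelled tree, then $t$ lies two levels below $s$ and is not an $R$-child, so $R_1\circ R_2\subseteq R$ fails; in the $\mathsf{T}$ case reflexivity is lost as well. Every bounded morphic image of a full product of $\mathsf{T}$-frames validates $\TNL$, so your second step is impossible onto such a tree. You must close $R$ under $R_1\circ R_2$ and $R_2\circ R_1$ and take reflexive closures, giving up unique $R$-histories. This is exactly how $\Tsc$ and $\Tim$ are built.

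The paper also proves the FMP, by filtration through $\Gamma^+$ with $R^{s+}=R^s\cup R_1^s\circ R_2^s\cup R_2^s\circ R_1^s$. This lets one countably branching tree map onto every finite rooted $L$-frame. If you work with unravelled canonical models instead, you must handle possibly uncountable branching, for example with a larger alphabet.

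Finally, the obstacle you single out is a red herring. The product neighborhood function is determined by the factors, so there is no choice to make there. $(\mathsf{sub})$ lies in $\TNL$ and holds on every full product of $\mathsf{T}$-frames. In the $\mathsf{D}$ case it already fails on $\Ni\times_n^+\Ni$, simply because $\beta\notin U_m(\beta)$.
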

\noindent
Hidden in this presentation is the fact that   the reflexivity axiom $\Box p \to p$ on neighborhood frames leads to  full products satisfying $(\mathsf{sub})$ as $(\mathsf{sub})\in \mathsf{T}\otimes \mathsf{T}\otimes \mathsf{T} +  (\mathsf{mix})$. Since the frame from Example~\ref{ex:sub-failure} is serial, it
immediately yields a counterexample to $(\mathsf{sub})$ over
non-reflexive serial neighborhood frame products showing $(\mathsf{sub})\not \in \mathsf{D}\times_n^+ \mathsf{D}$.

\section{Preliminaries}
\label{sec:prelim}

We briefly introduce our notation, in particular for Kripke and neighborhood semantics of modal logic.

\vspace{3pt}
\noindent\textbf{Sequences.}
For a non-empty set $A$, we denote by $A^*$ the set of all finite sequences over $A$, including the empty
	sequence $\varepsilon$.
		For $\vec a = a_1 \cdots a_k \in A^*$, we write $|\vec a| = k$, with $|\varepsilon| = 0$.
	Concatenation  of sequences is denoted by $\cdot$.
	We identify sequences of length one with elements of $A$.

\vspace{3pt}
\noindent\textbf{Syntax of modal logic.}
Multimodal logic with $n$ primitive unary modal operators $(\Box_i)_{i\in \{1,\dots,n\}}$  is called $n$-modal logic.
If $n=1$, we call the logic uni-modal; if $n=2$, we call it bi-modal; if $n=3$, we call it tri-modal.
 Given a set of atomic propositions $\AP$,
the well-formed formulas of $n$-modal logic are given by the following grammar:
$\varphi::=  \bot \mid p\mid \neg \varphi \mid \varphi \land \varphi \mid \Box_i \varphi$
where $\bot$ is the Boolean constant false, $p\in \AP$, and $i\in \{1,\dots,n\}$. We use the usual Boolean abbreviations as well as 
the abbreviation $\Diamond_i \varphi$ for $\neg \Box_i \neg \varphi$. 

A normal $n$-modal logic is a set of formulas $\Lambda$ that contains all propositional tautologies and the $\mathsf{K}$-axiom
$\Box_i (p \to q) \to (\Box_i p \to \Box_i q)$ for all $1\leq i \leq n$ and that is closed under 
\begin{itemize}
\item modus ponens, i.e., if $\varphi\in \Lambda$ and $\varphi \to \psi\in \Lambda$, then $\psi\in \Lambda$,
\item generalization, i.e., if $\varphi \in \Lambda$, then $\Box_i\varphi \in \Lambda$ for $1\leq i \leq n$,
\item uniform substitution, i.e., if $\varphi\in \Lambda$, then also $\varphi[p\setminus \psi]\in \Lambda$ for all $p\in\AP$ and all formulas $\psi$
where $\varphi[p\setminus \psi]$ is obtained from $\varphi$ by replacing all occurrences of $p$ with $\psi$. 
\end{itemize}
We write $\mathsf{K}_n$ for the smallest normal $n$-modal logic over $\{\Box_1,\dots,\Box_n\}$.
Given a normal $n$-modal logic   $\Lambda$ and formulas $\varphi_1,\dots,\varphi_k$, we write 
$\Lambda + \varphi_1+\dots + \varphi_k$ to denote the smallest normal $n$-modal logic containing $\Lambda$ and  $\varphi_1,\dots,\varphi_k$.

\vspace{3pt}
\noindent\textbf{Fusion of logics.}	Let $L_1$ and $L_2$ be normal modal logics in the unimodal language with operator $\Box$.
	Write $L_i'$ for the result of uniformly replacing every occurrence of $\Box$ in formulas of $L_i$
	by $\Box_i$ (for $i\in\{1,2\}$).  
	The \emph{fusion} $L_1 \otimes L_2 $ of $L_1$ and $L_2$ is the smallest normal bimodal logic over $\{\Box_1,\Box_2\}$ containing $L_1'$ and $L_2'$. 
This definition generalizes straightforwardly to arbitrary finite $n$.

\vspace{3pt}
\noindent\textbf{Kripke semantics.}
Kripke semantics of $n$-modal logic are given in terms of Kripke frames and models.
A Kripke model is a tuple $M=(W,R_1,\dots,R_n,V)$ where $W$ is a non-empty set of \emph{worlds} or \emph{states} called the \emph{domain} or \emph{universe}, $R_i\subseteq W\times W$ is a binary 
\emph{accessibility relation} for each $i\leq n$, and $V\colon \AP \to 2^W$ is a \emph{valuation}. The tuple  $F=(W,R_1,\dots,R_n)$ is called a \emph{Kripke frame} and $M$ is said to be based on $F$.
The satisfaction relation $M,x\vDash\varphi$ is defined recursively by
$	M,x\vDash p \Leftrightarrow x\in V(p)$, 
	$M,x\vDash \Box_i\varphi \Leftrightarrow \forall y\,(xR_i y\Rightarrow M,y\vDash\varphi)$,
and the standard semantics of $\bot,\neg,\land$. We write $\Diamond_i\varphi:=\neg\Box_i\neg\varphi$.


\vspace{3pt}
\noindent\textbf{Products of Kripke frames.}
Given two uni-modal Kripke frames $F=(W,S)$ and $F'=(W',S')$ we define 
the relations $R_1$ and $R_2$ on $W\times W'$ as follows:
for all $w,w'\in W$ and $v,v'\in W'$, we define $(w,v) R_1 (w',v')$ 
iff $wSw'$ and $v=v'$ as well as $(w,v) R_2 (w',v')$ iff $vS'v'$ and $w=w'$. 
Further, we define $R=R_1\circ R_2$, which then satisfies $(w,v) R_1 (w',v')$  iff $wSw'$ and $vS'v'$ for all $w,w'\in W$ and $v,v'\in W'$.
The product of the frames is the bi-modal Kripke frame $F\times F' = (W\times W',R_1,R_2)$ and the 
full product is the frame $F\times^+ F'=(W\times W',R_1,R_2,R)$.

\vspace{3pt}
\noindent\textbf{Neighborhood semantics.}
Let $X$ be a non-empty set and $k \ge 1$.
A \emph{$k$-neighborhood frame} (\emph{$k$-n-frame}) is a pair $(X,\vec{\tau})$, where
$\vec{\tau}=(\tau_1,\dots,\tau_k)$ and each $\tau_i : X \to 2^{2^X}$ assigns to every
$x\in X$ a \emph{filter} $\tau_i(x)$ on $X$, i.e., a non-empty collection of subsets
of $X$ closed under supersets and finite intersections.
A \emph{$k$-neighborhood model} (\emph{$k$-n-model}) is a triple
$(X,\vec{\tau},\nu)$, where $\nu : \mathsf{AP} \to 2^X$ is a valuation.
If $k=1$ or  it is clear from context, we simply speak of neighborhood frames and models.
For $M=(X,\tau_1,\dots,\tau_k,\nu)$ and $x\in X$, satisfaction is defined by
$M,x\vDash p$ iff $x\in\nu(p)$ and
$M,x\vDash \Box_i\varphi$ iff $\llbracket\varphi\rrbracket^M\in\tau_i(x)$,
where $\llbracket\varphi\rrbracket^M=\{\,y\in X\mid M,y\vDash\varphi\,\}$;
the clauses for $\bot,\neg,\land$ are standard.


\vspace{3pt}
\noindent\textbf{Validity and logic of frames.}\label{par:validity}  
A formula $\varphi$ is valid on a Kripke or neighborhood frame $F$, written $F\vDash \varphi$, if $M,x\vDash \varphi$ for all models $M$ based on $F$ and all worlds $x$ of $F$.  For a set of formulas $\Gamma$, we write $F \vDash \Gamma$ if $F \vDash \varphi$ for every $\varphi \in \Gamma$. The logic of a class of neighborhood  or Kripke frames $\mathcal{C}$ is defined as $\Log(\mathcal{C}) = \{\varphi \mid F \vDash \varphi \text{ for all } F \in \mathcal{C}\}$. If $\mathcal{C}=\{F\}$ for some $F$, we write $\Log(F)$. We call a frame $F$ an $L$-frame if $\Log(F) \supseteq L$.
 For a set of formulas $\Gamma$, we define $\NF(\Gamma) = \{F \mid F$ is a neighborhood frame and $F \vDash \Gamma\}$ and $\KF(\Gamma) = \{F \mid F$ is a Kripke frame and $F \vDash \Gamma\}$.
    
Given a Kripke frame    $F = (W,R_1,\dots,R_k)$, we  let $R_i(w) =\{v\in W \mid wR_i v\} $ for  $w\in W$ and $1\leq i \leq k$ and define an $k$-n-frame $N(F) = (W, \tau_1,\dots,\tau_k)$  by setting    $\tau_i(w) = \{ U \mid R_i(w) \subseteq U \subseteq W \}$.
Then, it is well-known that $\Log(F) = \Log(N(F))$ \cite{Pacuit}.

\phantomsection\label{neighborhood-T-D}
\noindent\textbf{Neighborhood conditions for (T) and (D).}
We recall standard semantic characterizations of the axioms (T) and (D)
on neighborhood frames (see, e.g.,~\cite{Pacuit}). Let $(X, \tau)$
be a neighborhood frame. Then:
\begin{itemize}
	\item (T): $(X, \tau) \models \Box p \to p$ iff for all $\nu \in X$ and all $U \in \tau(\nu)$,
	we have $\nu \in U$.
	\item (D): $(X, \tau) \models \Box p \to \Diamond p$ iff for all $\nu \in X$,
	$\emptyset \notin \tau(\nu)$.
\end{itemize}

\vspace{3pt}
\noindent\textbf{Bounded morphism.}
Let $F=(W,R_1,\dots,R_n)$ and $F'=(W',R'_1,\dots,R'_n)$ be Kripke frames. A function $f\colon W\to W'$ is called a \emph{bounded morphism}
if
\begin{itemize}
\item for each $w,v\in W$ and $i\leq n$ with $w R_i  v$, we have $f(w) R_i' f(v)$,
\item for each $w\in W$,  $v'\in W'$  and $i\leq n$ with $ f(w) R_i' v'$, 
there is a $v\in W$ with $w R_i v$ and $f(v)=v'$. 
\end{itemize}

   	Let $F= (W, \tau_1,\dots, \tau_k)$ and $F' = (W', \sigma_1,\dots, \sigma_k)$ be neighborhood frames.
   	A function $f : W \to W'$ is called a bounded morphism if
   	\begin{itemize}
   		\item for all $1 \le i \le k$, for all $w \in W$ and all $U \in \tau_i(w)$, there is a $V\in \sigma_i(f(w))$ with $V = f(U)$,
   		\item for all $1 \le i \le k$, for all $w \in W$ and all $V \in \sigma_i(f(w))$, there exists $U \in \tau_i(w)$ such that $f(U) \subseteq V$.
   	\end{itemize}
    If there is a surjective bounded morphism between neighborhood or Kripke frames $F$ and $F'$, we call $F'$ a bounded morphic image of $F$ and write
$F \twoheadrightarrow  F'$.
     It is well-known that 
    	$\Log(\{F\}) \subseteq \Log(\{F'\})$ in this case.

\section{Lower bounds on logics of products of neighborhood frames}

    Let $\mathcal{X} = (X, \tau_1)$ and $\mathcal{Y} = (Y, \tau_2)$ be two unimodal neighborhood frames. Then the product of these two frames
    is a bimodal neighborhood frame  $ \mathcal{X} \times_n \mathcal{Y} = (\mbox{X} \times \mbox{Y}, \tau_1', \tau_2')$  defined as follows : 
    \begin{align*}
   \tau_1'(x,y) &= \{ U \subseteq \mbox{X} \times \mbox{Y} \mid \exists V \in \tau_1(x) : V \times  \{ y \} \subseteq U \}, \\
   \tau_2'(x,y) &= \{ U \subseteq \mbox{X} \times \mbox{Y} \mid \exists V \in \tau_2(y) : \{ x \} \times V \subseteq U \}.
    \end{align*}
    Here, $\tau_1'$ is called the \emph{horizontal neighborhood function} and $\tau_2'$ is called the \emph{vertical neighborhood function}.
    
    The full product of $\mathcal{X}$ and $\mathcal{Y}$ is the $3$-n-frame
    $\mathcal{X}\times_n^{+}\mathcal{Y}=(X\times Y,\tau'_1,\tau'_2,\tau)$,
    where $\tau'_1$ and $\tau'_2$ are the horizontal and vertical neighborhood functions, respectively, and $\tau$ is the \emph{product neighborhood function} given by
    \[
    \tau(x,y)=\{\,U\subseteq X\times Y \mid \exists W\in\tau_1(x)\,\exists V\in\tau_2(y)\ (W\times V\subseteq U)\,\}.
    \]

%
%
%
%

\begin{definition}
	For unimodal logics $L_1,L_2$, we define their neighborhood product and full neighborhood product:
	The neighborhood product is the bimodal logic
	\[
	L_1\times_n L_2 := \Log\{\mathcal{X}\times_n \mathcal{Y}\mid
	\mathcal{X}\in \NF(L_1),\ \mathcal{Y}\in \NF(L_2)\}.
	\]
	The full neighborhood product is the trimodal logic
	\[
	L_1\times_n^{+}L_2 = \Log\{\mathcal{X}\times^+_n \mathcal{Y}\mid
	\mathcal{X}\in \NF(L_1),\ \mathcal{Y}\in \NF(L_2)\}.
	\]
	We use $\Box_1$ and $\Box_2$ as modal operators for the horizontal and vertical neighborhood functions 
	and $\Box$ as the modal operator of the product neighborhood function.
	
\end{definition}

\paragraph*{Lower bounds on the full neighborhood products $\mathsf{T}\times_n^+ \mathsf{T}$ and $\mathsf{D}\times_n^+ \mathsf{D}$.}
As before, we use modal operators $\Box_1$, $\Box_2$, and $\Box$ for the horizontal, vertical, and standard neighborhood functions of full products of neighborhood frames.
Consequently, we also use $\Box_1$, $\Box_2$, and $\Box$ as the three modalities of a fusion $L_1\otimes L_2 \otimes L_3$.

In this paper, we study the full neighborhood products of neighborhood frames satisfying the $\mathsf{T}$-axiom $\Box p \to p$ and the $\mathsf{D}$-axiom $\Box p \to \Diamond p$.
	Over the modal language with modal operators
	$\Box_1$,  $\Box_2$, and $\Box$,
	we  need the axiom 	$(\mathsf{mix})  \:=\: \Box p \rightarrow \Box_1 \Box_2 p \land \Box_2 \Box_1 p$ as discussed in the introduction.
	Using $\Box_1$,  $\Box_2$, and $\Box$ in the fusions, we define the logics
	\begin{align*}
	\TNL  \:=\: \mathsf{T}\otimes \mathsf{T}\otimes \mathsf{T} + (\mathsf{mix}) \qquad \text{ and } \qquad
	\DNL  \:=\: \mathsf{D}\otimes \mathsf{D}\otimes \mathsf{D} + (\mathsf{mix}).
	\end{align*}
First, let us spell out what the logics express about Kripke frames:

\begin{restatable}{proposition}{Kripkemix}
For  Kripke frames $F=(W,R_1,R_2,R)$,  $F\vDash (\mathsf{mix})$ if and only if  $R_1\circ R_2,R_2\circ R_1\subseteq R$.
\end{restatable}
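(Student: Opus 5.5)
The plan is a standard correspondence argument, proving the two directions separately. First note that (mix) is the conjunction of $\Box p \to \Box_1\Box_2 p$ and $\Box p\to\Box_2\Box_1 p$. It therefore suffices to show that $F\vDash \Box p \to \Box_1\Box_2 p$ if and only if $R_1\circ R_2\subseteq R$. The other conjunct is handled by the symmetric argument with the roles of $R_1,R_2$ swapped. Here $R_1\circ R_2$ is read as ``first an $R_1$-step, then an $R_2$-step,'' matching the order in which the boxes of $\Box_1\Box_2$ are evaluated: $x\vDash \Box_1\Box_2 p$ iff every $z$ reachable by $x R_1 y R_2 z$ satisfies $p$.

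\emph{Right to left.} Assume $R_1\circ R_2\subseteq R$. Take any model on $F$ and a world $x$ with $x\vDash\Box p$. Let $y$ and $z$ satisfy $xR_1y$ and $yR_2z$. Then $(x,z)\in R_1\circ R_2\subseteq R$, so $z\vDash p$. Since $y$ and $z$ were arbitrary, $x\vDash \Box_1\Box_2 p$. This direction is immediate.

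\emph{Left to right.} Assume $F\vDash\Box p\to\Box_1\Box_2 p$, and fix $x,y,z$ with $xR_1y$ and $yR_2z$. Choose the minimal valuation $V(p)=R(x)=\{w\mid xRw\}$. Then $x\vDash \Box p$, so by validity $x\vDash\Box_1\Box_2 p$. Following the path $xR_1yR_2z$ gives $z\vDash p$, i.e.\ $xRz$. Hence $R_1\circ R_2\subseteq R$.

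No step is a real obstacle. The only point needing care is the bookkeeping of the composition order: the notation $R_1\circ R_2$ must match the nesting $\Box_1\Box_2$. The paper's earlier usage $R=R_1\circ R_2$ is symmetric for products, so either convention works as long as it is applied consistently to both conjuncts. Since the statement requires both inclusions, the final result is independent of the convention chosen.
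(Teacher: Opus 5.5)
Your proof is correct and follows essentially the same route as the paper: it uses the same minimal valuation $\nu(p)=R(x)$ for the direction from validity to the frame condition and the immediate semantic argument for the converse. Your reduction to one conjunct plays the role of the paper's ``w.l.o.g.'', and the only real difference is that the paper argues both directions by contraposition while you argue them directly.
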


\begin{proof}
First assume  $F$ does not satisfy  $R_1\circ R_2,R_2\circ R_1\subseteq R$.
Then, w.l.o.g., there are  $x,y$ such that $x(R_1\circ R_2)y$ and $\neg xRy$. Choosing a valuation with $\nu(p):=R(x)$,
we get $F,\nu,x\vDash\Box p$ while  $F,\nu,x\nvDash \Box_1\Box_2 p$, showing that $(\mathsf{mix})$ is not valid on $F$.
Conversely, assume $F\nvDash (\mathsf{mix})$. So, w.l.o.g., there is a valuation $\nu$ and a world $x$ such that
$F,\nu,x\vDash \Box p$, but $F,\nu,x\nvDash \Box_1\Box_2 p$. So, there is a world $y$ with $x(R_1\circ R_2)y$, but  not $xRy$, showing that
$F$ does not satisfy $R_1\circ R_2,R_2\circ R_1\subseteq R$.	
\end{proof}

As  for any accessibility relation $R'$
on a Kripke frame, the formula $(\mathsf{T})$ is valid iff $R'$ is reflexive, and
$(\mathsf{D})$ is valid iff $R'$ is serial, this means the following:

\begin{corollary}\label{prop:first-order}
For  Kripke frames $F=(W,R_1,R_2,R)$, we have that 
\begin{enumerate}
\item $F\vDash \TNL$ iff $R_1$, $R_2$, and $R$ are reflexive and $R_1\circ R_2,R_2\circ R_1\subseteq R$ (which implies $R_1,R_2\subseteq R$).
\item $F\vDash \DNL$ iff $R_1$, $R_2$, and $R$ are serial and $R_1\circ R_2,R_2\circ R_1\subseteq R$.
\end{enumerate}
\end{corollary}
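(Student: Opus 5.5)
The plan is to read the corollary off the preceding proposition, using the fact that validity of a set of axioms on a Kripke frame is checked axiom by axiom. Since $\TNL$ and $\DNL$ are the smallest normal logics containing their axioms, and $\Log(F)$ is a normal logic, we have $F\vDash\TNL$ iff $F$ validates $(\mathsf{T})$ for each of $\Box_1,\Box_2,\Box$ together with $(\mathsf{mix})$. The same holds for $\DNL$ with $(\mathsf{D})$ in place of $(\mathsf{T})$. So the first step is to reduce the claim to these finitely many axioms.

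Next, I would invoke the standard single-modality correspondences. $(\mathsf{T})$ for $\Box_i$ is valid iff $R_i$ is reflexive, and $(\mathsf{D})$ for $\Box_i$ is valid iff $R_i$ is serial; each axiom involves only one modality, so the other relations play no role. The preceding proposition then gives that $(\mathsf{mix})$ is valid iff $R_1\circ R_2\subseteq R$ and $R_2\circ R_1\subseteq R$. Taking the conjunction of these equivalences yields both items.

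Finally, for the parenthetical remark in item 1, assume $R_2$ is reflexive and take $x R_1 y$. Then $y R_2 y$, so $x(R_1\circ R_2)y$, which gives $xRy$. Symmetrically, reflexivity of $R_1$ together with $R_2\circ R_1\subseteq R$ gives $R_2\subseteq R$.

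There is no real obstacle here. The only point to state explicitly is that frame validity is preserved under the rules of normal logics, namely modus ponens, generalization and uniform substitution. This is what lets validity of the axioms extend to all of $\TNL$ or $\DNL$, and the converse direction is trivial.
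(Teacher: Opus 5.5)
Your proposal is correct and follows the paper's own route. The paper obtains the corollary by combining the preceding proposition on $(\mathsf{mix})$ with the standard correspondences of $(\mathsf{T})$ with reflexivity and $(\mathsf{D})$ with seriality. Your axiom-by-axiom reduction via normality of $\Log(F)$ and your short derivation of $R_1,R_2\subseteq R$ simply make explicit what the paper leaves implicit.
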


\noindent 
Now, we provide lower bounds for the logics of full neighborhood products (proof in the appendix).
\begin{restatable}{proposition}{lowerbound}
\label{prop:lowerbound}
We have
$\mathsf{T}\times_n^+ \mathsf{T} \supseteq \TNL$ and $\mathsf{D}\times_n^+ \mathsf{D} \supseteq \DNL $.
\end{restatable}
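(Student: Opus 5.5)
Since $\mathsf{T}\times_n^+\mathsf{T}$ and $\mathsf{D}\times_n^+\mathsf{D}$ are logics of classes of neighborhood frames, they are normal tri-modal logics. Validity on a class of frames is preserved by modus ponens, generalization and uniform substitution, and every neighborhood function here is filter-valued, so the $\mathsf{K}$-axioms hold. It therefore suffices to check that the generating axioms of $\TNL$ (resp.\ $\DNL$) are valid on every full product $\mathcal{X}\times_n^+\mathcal{Y}$ with $\mathcal{X},\mathcal{Y}$ $\mathsf{T}$-frames (resp.\ $\mathsf{D}$-frames). A preliminary point is that $\tau'_1,\tau'_2,\tau$ are again filters. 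Upward closure is built into the definitions. Closure under finite intersections follows from $(V\times\{y\})\cap(V'\times\{y\})=(V\cap V')\times\{y\}$ and $(W\times V)\cap(W'\times V')=(W\cap W')\times(V\cap V')$, together with the filter property of $\tau_1(x),\tau_2(y)$.

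\textbf{Step 1: $(\mathsf{T})$ resp.\ $(\mathsf{D})$ for each modality.} We use the neighborhood characterizations recalled in the preliminaries. In the $\mathsf{T}$ case, if $U\in\tau'_1(x,y)$, then $U\supseteq V\times\{y\}$ with $x\in V$, so $(x,y)\in U$. The case of $\tau'_2$ is symmetric. For $\tau$, $U\supseteq W\times V$ with $x\in W$ and $y\in V$. In the $\mathsf{D}$ case, every set in $\tau_1(x)$ and $\tau_2(y)$ is nonempty, so $V\times\{y\}$, $\{x\}\times V$ and $W\times V$ are nonempty, and hence $\emptyset$ is not a neighborhood of $(x,y)$.

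\textbf{Step 2: $(\mathsf{mix})$.} This is the only step with real content. Fix a model and a point $(x,y)$ with $(x,y)\vDash\Box p$, i.e.\ $P:=\llbracket p\rrbracket\supseteq W\times V$ for some $W\in\tau_1(x)$ and $V\in\tau_2(y)$. To show $(x,y)\vDash\Box_1\Box_2 p$ we exhibit a horizontal neighborhood contained in $\llbracket\Box_2 p\rrbracket$. Take $W\times\{y\}$, which lies in $\tau'_1(x,y)$. For each $x'\in W$, the set $\{x'\}\times V$ lies in $\tau'_2(x',y)$, since the vertical neighborhood at $(x',y)$ depends only on $y$, and it is contained in $W\times V\subseteq P$. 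So $(x',y)\vDash\Box_2 p$, and hence $W\times\{y\}\subseteq\llbracket\Box_2 p\rrbracket$, which gives $\llbracket\Box_2p\rrbracket\in\tau'_1(x,y)$. The conjunct $\Box_2\Box_1 p$ is symmetric: use $\{x\}\times V$ and, for each $y'\in V$, the horizontal neighborhood $W\times\{y'\}\in\tau'_1(x,y')$.

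The main (mild) obstacle is just noticing the key fact used in Step 2: the horizontal neighborhoods at $(x,y')$ depend only on $x$, and the vertical neighborhoods at $(x',y)$ depend only on $y$, so the same $W$ and $V$ can be reused at every point of the rectangle. No reflexivity or seriality is needed for $(\mathsf{mix})$. This is exactly why $(\mathsf{mix})$, rather than $(\mathsf{sub})$, is the right axiom in the $\mathsf{D}$ case. Combining Steps 1 and 2 with normality gives both inclusions.
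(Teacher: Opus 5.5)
Your proof is correct and follows essentially the same route as the paper: check $(\mathsf{T})$ resp.\ $(\mathsf{D})$ for all three neighborhood functions via the standard neighborhood conditions, and verify $(\mathsf{mix})$ by taking $W\times\{y\}\in\tau'_1(x,y)$ and using $\{x'\}\times V\in\tau'_2(x',y)$ inside the rectangle $W\times V$. Your explicit check that $\tau'_1,\tau'_2,\tau$ are filter-valued fills in a point the paper leaves implicit; it is what makes the logic normal and justifies the $\emptyset$-characterization of $(\mathsf{D})$. The paper's extra verification of $(\mathsf{sub})$ in the $\mathsf{T}$ case is not needed for $\TNL$.
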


\begin{proof}
      To show $\mathsf{T}\times_n^+ \mathsf{T} \supseteq \TNL$, let $F_1$ and $F_2$ be
      $\mathsf{T}$-frames and consider their full product $F_1 \times_n^+ F_2$.
      Fix $(x,y) \in F_1 \times_n^+ F_2$. 
       We first verify the axiom $\Box p \to p$. Recall the standard fact \ref{neighborhood-T-D} that
       \[
       F,x \vDash \Box p \to p \quad\text{iff}\quad x \in U \text{ for all } U \in \tau(x).
       \]
       Now, assume $U \in \tau(x,y)$.
       By definition of the standard neighborhood function, there exist
       $V \in \tau_1(x)$ and $W \in \tau_2(y)$ with
       $V \times W \subseteq U$.
       Since $x \in V$ and $y \in W$, we obtain
       $(x,y) \in V \times W \subseteq U$. The axioms $\Box_1 p \to p$ and $\Box_2 p \to p$ are validated by analogous
       arguments. 
       
      We next verify the axiom $\Box p \to \Box_1 p \land \Box_2 p$.
      Assume that $F_1 \times_n^+ F_2,\nu,(x,y) \vDash \Box p$.
      By definition, there exists a set $U \subseteq \nu(p)$ such that
      $U \supseteq V \times W$ for some $V \in \tau_1(x)$ and $W \in \tau_2(y)$.
      Since $y \in W$, we have $V \times \{y\} \subseteq U$, and hence
      $U \in \tau_1'(x,y)$.
      The same argument applies to $\tau_2'(x,y)$.
      It follows that
      $F_1 \times_n^+ F_2,(x,y) \vDash \Box_1 p \land \Box_2 p$.

For $\Box p \to \Box_1 \Box_2 p$, assume $F_1 \times_n^+ F_2, \nu, (x,y) \vDash \Box p$. Then there exists $U \in \tau(x,y)$ such that $U \subseteq \nu(p)$ and $U_1 \times U_2 \subseteq U$ for some $U_1 \in \tau_1(x)$ and $U_2 \in \tau_2(y)$. We show $F_1 \times_n^+ F_2, \nu,  (x,y) \vDash \Box_1 \Box_2 p$. Since $U_1 \times \{y\} \in \tau_1'(x,y)$, it suffices to prove $U_1 \times \{y\} \subseteq \nu(\Box_2 p)$. Let $(v',y) \in U_1 \times \{y\}$. As $U_2 \in \tau_2(y)$, we have $\{v'\} \times U_2 \in \tau_2'(v',y)$. Moreover, $\{v'\} \times U_2 \subseteq U_1 \times U_2 \subseteq U \subseteq \nu(p)$. Hence $(v',y) \vDash \Box_2 p$. The argument for $\Box p \to \Box_2 \Box_1 p$ is symmetric.

For $\mathsf{D}\times_n^+ \mathsf{D} \supseteq \mathsf{D}_3^m$, let $F_1,F_2$ be $\mathsf{D}$-frames and fix $(x,y)$. To verify $\Box p \to \Diamond p$, it suffices to show $\emptyset \notin \tau(x,y)$ \ref{neighborhood-T-D}. Let $U \in \tau(x,y)$. Then $V \times W \subseteq U$ for some $V \in \tau_1(x)$ and $W \in \tau_2(y)$. Since $F_1,F_2 \vDash \mathsf{D}$, we have $V \neq \emptyset$ and $W \neq \emptyset$. Hence $U \neq \emptyset$. Thus $\emptyset \notin \tau(x,y)$. This completes the proof, as $\mathsf{(mix)}$ holds on all full products.
\end{proof}

\section{Upper bounds on the full neighborhood products \texorpdfstring{$\mathsf{T}\times_n^+ \mathsf{T}$}{T x\_n+ T} and \texorpdfstring{$\mathsf{D}\times_n^+ \mathsf{D}$}{D x\_n+ D}}

In this section, we now prove our main result by showing that $\mathsf{T}\times_n^+ \mathsf{T}\subseteq \TNL$ and that $\mathsf{D}\times_n^+ \mathsf{D}\subseteq \DNL$.
To this end, we first define two infinite trees $\Tsc$ and $\Tim$ with three relations and show soundness and completeness of $\TNL$ and $\DNL$ with respect to the respective trees (Section \ref{sec:trees}). Based on infinite trees with one relation for the logics $\mathsf{T}$ and $\mathsf{D}$,
we afterwards define neighborhood frames $\Nr$ and $\Ni$ validating $\mathsf{T}$ and $\mathsf{D}$, respectively, based on pseudo-infinite extensions of paths in the trees (Section \ref{sec:pseudoinfinite}).
The proof is then completed by providing a bounded morphism from $\Nr\times_n^+\Nr$ to $N(\Tsc)$ and from $\Ni\times_n^+\Ni$ to $N(\Tim)$ (Section \ref{sec:boundedmorphism}).

\subsection{Completeness of \texorpdfstring{$\TNL$}{TNL} and \texorpdfstring{$\DNL$}{DNL} with respect to single trees}
\label{sec:trees}

We prove that $\TNL$ is sound and complete with respect to a single reflexive infinite tree, and that $\DNL$ is sound and complete with respect to its irreflexive counterpart. 
To this end, we introduce the infinitely branching trees $\Tr$ and $\Ti$ of infinite depth, each equipped with a single relation, where $\Tr$ is reflexive and $\Ti$ is irreflexive. 
We show that $\mathsf{T}$ is complete for $\Tr$ and $\mathsf{D}$ for $\Ti$, results that will be used in the subsequent section.

Afterwards, we define an infinitely branching, infinite-depth reflexive tree $\Tsc$ with relations $R,R_1,R_2$ satisfying 
$R_1\circ R_2,\,R_2\circ R_1 \subseteq R$, thereby validating $\TNL$. 
On the irreflexive counterpart $\Tim$, the logic $\DNL$ is valid. 
We then prove soundness and completeness of $\TNL$ with respect to $\Tsc$ and of $\DNL$ with respect to $\Tim$. In \cite{vanbenguram}, a similar approach based on infinite binary trees with three relations is used to establish a completeness result for $\mathsf{S4}\otimes \mathsf{S4}\otimes \mathsf{S4}
+ (\mathsf{sub})$. 
Before establishing completeness, we recall the finite model property.

\begin{definition}[Finite model property]
	A logic $L$ has the \emph{finite model property} (FMP) if every formula
	$\varphi \notin L$ is falsified in some finite Kripke model of $L$;
	that is, there exist a finite model $M$ based on a frame $F$
	with $F \vDash L$ and a world $w$ such that $M,w \nvDash \varphi$.
\end{definition}

\begin{restatable}{proposition}{fmp}
	\label{prop:fmp}
	The logics $\TNL$ and $\DNL$ have the FMP.
\end{restatable}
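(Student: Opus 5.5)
We would prove the FMP by filtration of the canonical model, choosing the filtered relations so that all the first-order conditions of Corollary~\ref{prop:first-order} survive. Fix $\varphi\notin\TNL$ (resp.\ $\DNL$). Let $M^c=(W^c,R^c_1,R^c_2,R^c,V^c)$ be the canonical model of the logic, and let $\Sigma$ be the closure of $\varphi$ under subformulas, extended by all formulas $\Box_1\Box_2\psi$ and $\Box_2\Box_1\psi$ with $\Box\psi$ a subformula, and then closed again under subformulas; $\Sigma$ is finite. All axioms involved are Sahlqvist, so the canonical frame satisfies the conditions of Corollary~\ref{prop:first-order}. Since $\varphi$ fails at some point of $M^c$, it suffices to build a finite model on a frame satisfying those conditions that preserves truth of $\Sigma$-formulas. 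Let $W$ be the set of $\equiv_\Sigma$-classes, so $|W|\le 2^{|\Sigma|}$.

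We would use the \emph{least filtration} for $R_1$ and $R_2$: $[x]R_i[y]$ iff $x'R^c_iy'$ for some $x'\equiv x$, $y'\equiv y$. For $\Box$ we would use a \emph{Lemmon-style} relation designed to absorb compositions: $[x]R[y]$ iff for all $\Box\psi\in\Sigma$, $\Box\psi\in x$ implies $\psi\in y$. The least-filtration conditions for $R_i$ are standard, and $R$ clearly satisfies both filtration conditions: it contains the least filtration of $R^c$, and it validates the $\Box$-clause by definition. The filtration lemma then gives $M,[x]\vDash\chi$ iff $M^c,x\vDash\chi$ for $\chi\in\Sigma$, so $\varphi$ is refuted on the finite frame.

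Next we check the frame conditions. Reflexivity of each relation in the $\mathsf{T}$ case is immediate, since the canonical relations are reflexive and $R$ contains the least filtration. For seriality in the $\mathsf{D}$ case, the least filtrations inherit seriality from $R^c_i$, and $R\supseteq$ the least filtration of the serial $R^c$. The main step, and what we expect to be the obstacle, is $R_1\circ R_2\subseteq R$ (and symmetrically $R_2\circ R_1\subseteq R$). Suppose $[x]R_1[y]R_2[z]$, witnessed by $x'R^c_1y'$ and $y''R^c_2z'$ with $y'\equiv y''$, $x'\equiv x$, $z'\equiv z$. Let $\Box\psi\in\Sigma\cap x$. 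Then $\Box\psi\in x'$, and by $(\mathsf{mix})$ $\Box_1\Box_2\psi\in x'$, so $\Box_2\psi\in y'$. Since $\Box_2\psi\in\Sigma$ (this is exactly why we added the mixed formulas), we get $\Box_2\psi\in y''$, hence $\psi\in z'$, and as $\psi\in\Sigma$, also $\psi\in z$. So $[x]R[z]$. The delicate point is that the mixed formulas must lie in $\Sigma$ without destroying finiteness. Adding $\Box_1\Box_2\psi$ only introduces the new subformula $\Box_2\psi$, and the closure step adds no new $\Box$-formulas, so $\Sigma$ stays finite.

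By Corollary~\ref{prop:first-order}, the finite frame is then a $\TNL$-frame (resp.\ a $\DNL$-frame) refuting $\varphi$, which establishes the FMP for both logics.
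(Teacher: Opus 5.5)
Your proof is correct and follows essentially the paper's route. You filtrate through a finite closure enlarged by the one-dimensional boxes $\Box_1\psi,\Box_2\psi$ (the paper adds $\Diamond_1\psi,\Diamond_2\psi$), take least filtrations for $R_1,R_2$, and use $(\mathsf{mix})$ to show that $R_1\circ R_2$- and $R_2\circ R_1$-steps respect the $\Box$-filtration condition. The only difference is cosmetic: you take the largest filtration for $\Box$, which absorbs the compositions automatically, whereas the paper takes the least filtration $R^s$, explicitly adds $R_1^s\circ R_2^s\cup R_2^s\circ R_1^s$, and checks by exactly your argument that the result is still a filtration.
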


    \begin{proof}
	We first prove the result for $\TNL$. The same approach works for $\DNL$ as we briefly explain at the end of the proof.
	
	We fix a class of Kripke frames $\mathcal C$ w.r.t.\ which $\TNL$ is sound and complete, and then obtain the FMP by filtration.
	Let $\mathcal C$ be the class of all frames $F=(W,R,R_1,R_2)$, with $F\vDash \TNL$.
	Soundness of $\TNL$ w.r.t.\ $\mathcal C$ is immediate, and since all axioms of $\TNL$ are Sahlqvist (\cite{Black}),
	the Sahlqvist theorem yields completeness of $\TNL$ w.r.t.\ $\mathcal C$.

	Let $\varphi \notin \TNL$. By completeness, there exist a $\TNL$-model
	$M=(W,R,R_1,R_2,V)$ and a world $w\in W$ such that $M,w \nvDash \varphi$.
	Let $\Gamma:=Sub(\varphi)$ be the set of subformulas of $\varphi$.
	A filtration through $\Gamma$ itself would not guarantee that the condition $ R_1\circ R_2 \;\cup\; R_2\circ R_1 \;\subseteq\; R$ is preserved.
	
	To circumvent this problem, we first enlarge $\Gamma$ to $\Gamma^{+}$ in the following way: for every $\Diamond\psi\in\Gamma$, we add
	$\Diamond_1\psi$ and $\Diamond_2\psi$. 
	Let now $M^{s}=(W_{\Gamma^+},R^{s},R_1^{s},R_2^{s},V_{\Gamma^{+}})$ be the minimal filtration of $M$ through $\Gamma^{+}$
	where $W_{\Gamma^+}=W/{\sim_{\Gamma^+}}$ and $\sim_{\Gamma^+}$ is the usual
	$\Gamma^+$-equivalence.
	To restore the required interaction, we redefine the $\Box$-relation by
	$$
	R^{s+}\;:=\;R^{s}\ \cup\ (R_1^{s}\circ R_2^{s})\ \cup\ (R_2^{s}\circ R_1^{s}),
	$$
	and write $M^{s+}= (F^{s+}, V_{\Gamma^+}) = (W_{\Gamma^+},R^{s+},R_1^{s},R_2^{s},V_{\Gamma^{+}})$.
	We show that $M^{s+}$ is still a filtration. Thus it remains to prove that for all
	$\Diamond\varphi\in\Gamma^{+}$ and all $[w],[v]\in W_{\Gamma^+}$,
	\[
	[w]\,R^{s+}\,[v]\ \text{ and }\ M,v\vDash \varphi
	\quad\Rightarrow\quad
	M,w\vDash \Diamond\varphi.
	\]
	If $[w]\,R^{s}\,[v]$, the claim holds as the minimal filtration $M^{s}$ is a filtration. Suppose $([w],[v])\in R_1^{s}\circ R_2^{s}$. Then there is $[u]\in W_{\Gamma^{+}}$ with
	$[w]R_1^{s}[u]$ and $[u]R_2^{s}[v]$. Choose representatives $w'\in[w]$, $u',u''\in[u]$,
	$v'\in[v]$ such that $w'R_1u'$ and $u''R_2v'$. Since $M,v\vDash\varphi$, we have
	$M,v'\vDash\varphi$, hence $M,u''\vDash\lozenge_2\varphi$ and therefore also
	$M,u'\vDash\lozenge_2\varphi$ (as $u'\sim_{\Gamma^{+}}u''$). It follows that
	$M,w'\vDash\lozenge_1\lozenge_2\varphi$. As $M$ validates $\TNL$, the axiom
	$\lozenge_1\lozenge_2\psi\rightarrow\lozenge\psi$ holds globally, so $M,w'\vDash\lozenge\varphi$.
	Finally, $w'\sim_{\Gamma^{+}}w$ yields $M,w\vDash\lozenge\varphi$.
	The case $([w],[v])\in R_2^{s}\circ R_1^{s}$ is symmetric.
	
	To complete the proof for $\TNL$, it remains to show that $F^{s+}\vDash\TNL$. 
	Reflexivity of all relations is preserved by any filtration. Further, $ R_1^{s}\circ R_2^{s}\subseteq R^{s+}$ and 
	$  R_2^{s}\circ R_1^{s}\subseteq R^{s+}$ 
	hold by the  definition of $R^{s+}$. 
	
	For $\DNL$, the same filtration works. As seriality is also preserved by any filtration, the constructed filtration  $M^{s+}$
	when starting with a $\DNL$-model $M$ is serial and satisfies $ R_1^{s}\circ R_2^{s}\subseteq R^{s+}$ and 
	$  R_2^{s}\circ R_1^{s}\subseteq R^{s+}$.
\end{proof}

\paragraph*{Trees $\Tr$, $\Ti$, $\Tsc$ and $\Tim$.}
Formally, $\Tr=(W,R)$ is the Kripke frame with $W=\mathbb N^*$, the set of all finite sequences over $\mathbb N$, and
$sRt$ iff there exists $u\in\mathbb N\cup\{\epsilon\}$ such that $t=s\cdot u$. 
The irreflexive variant is $\Ti=(W,R')$, where
$sR't$ iff there exists $u\in\mathbb N$ such that $t=s\cdot u$.

       Soundness and completeness of $\mathsf{T}$ and $\mathsf{D}$ with respect to $\Tr$ and $\Ti$, respectively, follow from the FMP of $\mathsf{T}$ and $\mathsf{D}$ together with a standard tree unfolding argument:
       
\begin{restatable}{proposition}{Tsoundandcomplete}
\label{prop:Tsoundcomplete}
    $\mathsf{T}$ is sound and complete w.r.t $\Tr$ and $\mathsf{D}$ is sound and complete w.r.t $\Ti$.
    \end{restatable}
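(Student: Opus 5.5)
Soundness is immediate. $\Tr$ is reflexive: $sRs$ holds by taking $u=\epsilon$. So $\Tr\vDash\mathsf{T}$. $\Ti$ is serial, since $sR'(s\cdot 0)$ for every $s$. So $\Ti\vDash\mathsf{D}$.

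For completeness of $\mathsf{T}$, let $\varphi\notin\mathsf{T}$. By the FMP of $\mathsf{T}$ (the classical fact that filtration preserves reflexivity), $\varphi$ is refuted at a world $w_0$ of a finite reflexive Kripke model $M=(W_0,S,V)$. We unravel $M$ from $w_0$. Let $P$ be the set of finite paths $w_0w_1\cdots w_k$ with $w_iSw_{i+1}$, related by one-step extension. This is a reflexive tree once we add loops. Each $w$ has finitely many successors and, by reflexivity, at least one, namely $w$ itself. The map $\mathrm{last}\colon P\to W_0$ is a bounded morphism onto the generated subframe of $w_0$.

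Next we pass to $\Tr$. Define $g\colon\mathbb N^*\to P$ inductively: $g(\varepsilon)=w_0$. If $g(s)$ ends in $w$ and $w$ has $S$-successors $v_0,\dots,v_{m-1}$ (a nonempty list), let $g(s\cdot n)=g(s)\cdot v_{n \bmod m}$. Then $f=\mathrm{last}\circ g$ sends $s\cdot n$ to some $S$-successor of $f(s)$, and every $S$-successor is hit infinitely often. The reflexive step $sRs$ maps to $f(s)Sf(s)$, which holds by reflexivity of $S$.

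We now check that $f\colon\Tr\to(W_0{\restriction}w_0,S)$ is a surjective bounded morphism.
\begin{itemize}
\item Forth: both the $u=\epsilon$ case and the $u\in\mathbb N$ case are covered above.
\item Back: if $f(s)Sv$, then $v=v_i$ for some $i$, and $s\cdot i$ is a preimage.
\item Surjectivity: it holds on the point-generated subframe, by induction on path length.
\end{itemize}
Bounded morphisms (and generated subframes) preserve validity in the right direction, so $\varphi$ is refuted on $\Tr$.

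The $\mathsf{D}$ case uses the same construction with $R'$. Seriality of $S$ guarantees $m\ge1$, so $g$ is well defined. There are no reflexive steps in $\Ti$, so the forth condition only concerns $s\cdot n$. The FMP of $\mathsf{D}$ again comes from filtration, which preserves seriality.

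The only delicate point is making the infinitely branching tree map onto a finitely branching one. Cycling indices modulo $m$, which relies on seriality to make $m\ge1$, handles this. The rest is routine.
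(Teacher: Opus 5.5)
Your proposal is correct and follows the same route as the paper: use the FMP of $\mathsf{T}$ (resp.\ $\mathsf{D}$), then build a surjective bounded morphism from $\Tr$ (resp.\ $\Ti$) onto the finite frame by mapping the children $s\cdot n$ cyclically onto the successors of $f(s)$, with reflexivity (resp.\ seriality) guaranteeing that the successor list is nonempty. The detour through the unravelling $P$ is never actually used, since you verify $f=\mathrm{last}\circ g$ directly; your explicit restriction to the point-generated subframe of $w_0$ does make precise the rootedness assumption that the paper leaves implicit.
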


    \begin{proof}
    Soundness is clear in both cases.
    For completeness of $\mathsf{T}$, we use the finite model property of $\mathsf{T}$, which yields
    $\mathsf{T} = \Log(\{F \mid F \text{ is finite and reflexive}\})$. Hence, to show $\Log(\Tr) \subseteq \mathsf{T}$, it suffices to construct a surjective bounded morphism from $\Tr$ onto an arbitrary finite reflexive frame.
    Now, let $F = (W', R')$ be such a finite rooted frame with root $w$. We recursively define an mapping $f$ from  the states of $\Tr$ to  states of $F$. Initially, we assign $w$ to the root of $\Tr$. Now suppose that a node $x \in \Tr$ is mapped to a node $u = f(x) \in F$. Let $s_0,\dots,s_{k-1}$ be the successors of $u$, where $k \geq 1$ is ensured by reflexivity. We map the successors $x\cdot 0, x\cdot 1, x\cdot 2, \dots$ of $x$ onto the successors of $u$ in a cyclic manner: for $i = k\cdot m + \ell$ with $m \in \mathbb{N}$ and $\ell < k$, we set $f(x\cdot i) = s_\ell$. By construction, $f$ is a bounded morphism.
    
    Completeness of $\mathsf{D}$ with respect to $\Ti$ is shown analogously.
    \end{proof}

We now turn to the tree $\Tsc$, which is also illustrated in Figure \ref{fig:treerm}. At each state, $R_1$ and $R_2$ give rise to disjoint countably infinite sets of successors. 
The relation $R$ strictly extends $R_1\cup R_2$ by adding a further countably infinite set of immediate successors and by satisfying 
$R_1\circ R_2 \subseteq R$ and $R_2\circ R_1 \subseteq R$. 
Formally, the domain of $\Tsc$ is 
$W=(\mathbb N\times\{0,1,2\})^*$, 
where the second component distinguishes the three types of successors described above. 
We define $\Tsc=(W,R,R_1,R_2)$ by stipulating, for all $s,t\in W$, that            
\begin{align*}
	s R_1 t 
	&\quad\text{iff}\quad
	t = s \cdot u \text{ for some } u \in \mathbb{N} \times \{1\} \cup \{\epsilon\},\\
	s R_2 t 
	&\quad\text{iff}\quad
	t = s \cdot u \text{ for some } u \in \mathbb{N} \times \{2\} \cup \{\epsilon\},\\
	s R t 
	&\quad\text{iff}\quad
	t = s \cdot u \text{ for some } u \in \mathbb{N} \times \{0\} \cup \{\epsilon\} \\
	&\qquad\quad\text{or there exists } q \in W \text{ such that } s R_1 q R_2 t
	\text{ or } s R_2 q R_1 t.
\end{align*}
Note that by the reflexivity of $R_1$ and $R_2$, the last condition also ensures $R_1,R_2\subseteq R$.

The tree $\Tim$ is defined analogously without making the relations reflexive. So, $\Tim = (W, R', R'_1, R'_2)$
where for all $s,t\in W$, we let 
$sR'_1 t$ iff $t=s\cdot u$ for some  $u \in \mathbb{N} \times \{1\} $;
$sR'_2 t$ iff $t=s\cdot u$ for some  $u \in \mathbb{N} \times \{2\} $;
$sR't$ iff $t=s\cdot u$ for some  $u \in \mathbb{N} \times \{0\} $ or there is a $q\in W$ with 
$s R'_1 q R'_2 t$ or  $s R'_2 q R'_1 t$.
Note that here $R'_1,R'_2\not \subseteq R'$.

    \begin{figure}[t]
	\centering
	\begin{tikzpicture}[scale=0.7]
		\node (K) at (0,0) [circle,fill=black,inner sep=1pt] {};
		\node (A) at (-5,-2) [circle,fill=black,inner sep=1pt] {};
		\node (B) at (-2,-2) [circle,fill=black,inner sep=1pt] {};
		\node (C) at (-1,-2) [circle,fill=black,inner sep=1pt] {};
		\node (K2) at (0,-2) [circle,fill=black,inner sep=1pt] {};
		\node (D) at (1,-2) [circle,fill=black,inner sep=1pt] {};
		\node (E) at (2,-2) [circle,fill=black,inner sep=1pt] {};
		\node (F) at (5,-2) [circle,fill=black,inner sep=1pt] {};
        \draw[line width=0.8pt, dash pattern=on 0pt off 6pt, line cap=round] (-4.7,-2) -- (-2.3,-2);
        \draw[line width=0.8pt, dash pattern=on 0pt off 6pt, line cap=round] (-.7,-2) -- (.7,-2);
        \draw[line width=0.8pt, dash pattern=on 0pt off 6pt, line cap=round] (2.3,-2) -- (4.7,-2);
        \draw[->, >=stealth, wred, thick](K) -- node[midway, left = 9pt] {\footnotesize $R_1$} (A);
        \draw[->, >=stealth, wred, thick] (K) -- (B);
        \draw[->, >=stealth, mblue, thick](K) -- node[midway, right = 0.1pt] {\footnotesize $R$} (C);
        \draw[->, >=stealth, mblue, thick] (K) -- (D);
        \draw[->, >=stealth, sgreen, thick] (K) -- (E);
        \draw[->, >=stealth, sgreen, thick](K) -- node[midway, right = 9pt] {\footnotesize $R_2$} (F);
        \node (A1) at (-7,-3.5) [circle,fill=black,inner sep=1pt] {};
        \node (B1) at (-6,-3.5) [circle,fill=black,inner sep=1pt] {};
        \node (C1) at (-5.5,-3.5) [circle,fill=black,inner sep=1pt] {};
        \node (D1) at (-4.5,-3.5) [circle,fill=black,inner sep=1pt] {};
        \node (E1) at (-4,-3.5) [circle,fill=black,inner sep=1pt] {};
        \node (F1) at (-3,-3.5) [circle,fill=black,inner sep=1pt] {};
		\draw[->, >=stealth, wred, thick] (A) -- (A1);       
		\draw[->, >=stealth, wred, thick] (A) -- (B1);
		\draw[->, >=stealth, mblue, thick] (A) -- (C1);
		\draw[->, >=stealth, mblue, thick] (A) -- (D1);
		\draw[->, >=stealth, sgreen, thick] (A) -- (E1);
		\draw[->, >=stealth, sgreen, thick] (A) -- (F1);
		\draw[line width=0.8pt, dash pattern=on 0pt off 6pt, line cap=round] (-6.8,-3.5) -- (-6.1,-3.5);
		\draw[line width=0.8pt, dash pattern=on 0pt off 6pt, line cap=round] (-5.3,-3.5) -- (-4.3,-3.5);
		\draw[line width=0.8pt, dash pattern=on 0pt off 6pt, line cap=round] (-3.8,-3.5) -- (-3.1,-3.5);
		\draw[->, >=stealth, mblue, thick] (K) -- (E1);
		\draw[->, >=stealth, mblue, thick] (K) -- (F1);
		\node (A2) at (-2,-3.5) [circle,fill=black,inner sep=1pt] {};
		\node (B2) at (-1,-3.5) [circle,fill=black,inner sep=1pt] {};
		\node (C2) at (-.5,-3.5) [circle,fill=black,inner sep=1pt] {};
		\node (D2) at (.5,-3.5) [circle,fill=black,inner sep=1pt] {};
		\node (E2) at (1,-3.5) [circle,fill=black,inner sep=1pt] {};
		\node (F2) at (2,-3.5) [circle,fill=black,inner sep=1pt] {};
		\draw[->, >=stealth, wred, thick] (K2) -- (A2);       
		\draw[->, >=stealth, wred, thick] (K2) -- (B2);
		\draw[->, >=stealth, mblue, thick] (K2) -- (C2);
		\draw[->, >=stealth, mblue, thick] (K2) -- (D2);
		\draw[->, >=stealth, sgreen, thick] (K2) -- (E2);
		\draw[->, >=stealth, sgreen, thick] (K2) -- (F2);
		\draw[line width=0.8pt, dash pattern=on 0pt off 6pt, line cap=round] (-1.8,-3.5) -- (-1.1,-3.5);
		\draw[line width=0.8pt, dash pattern=on 0pt off 6pt, line cap=round] (-.3,-3.5) -- (0.7,-3.5);
		\draw[line width=0.8pt, dash pattern=on 0pt off 6pt, line cap=round] (1.2,-3.5) -- (1.9,-3.5);
		\node (A3) at (3,-3.5) [circle,fill=black,inner sep=1pt] {};
		\node (B3) at (4,-3.5) [circle,fill=black,inner sep=1pt] {};
		\node (C3) at (4.5,-3.5) [circle,fill=black,inner sep=1pt] {};
		\node (D3) at (5.5,-3.5) [circle,fill=black,inner sep=1pt] {};
		\node (E3) at (6,-3.5) [circle,fill=black,inner sep=1pt] {};
		\node (F3) at (7,-3.5) [circle,fill=black,inner sep=1pt] {};
		\draw[->, >=stealth, wred, thick] (F) -- (A3);       
		\draw[->, >=stealth, wred, thick] (F) -- (B3);
		\draw[->, >=stealth, mblue, thick] (F) -- (C3);
		\draw[->, >=stealth, mblue, thick] (F) -- (D3);
		\draw[->, >=stealth, sgreen, thick] (F) -- (E3);
		\draw[->, >=stealth, sgreen, thick] (F) -- (F3);
		\draw[line width=0.8pt, dash pattern=on 0pt off 6pt, line cap=round] (3.2,-3.5) -- (3.9,-3.5);
		\draw[line width=0.8pt, dash pattern=on 0pt off 6pt, line cap=round] (4.7,-3.5) -- (5.6,-3.5);
		\draw[line width=0.8pt, dash pattern=on 0pt off 6pt, line cap=round] (6.2,-3.5) -- (6.9,-3.5);
		\draw[->, >=stealth, mblue, thick] (K) -- (A3);
		\draw[->, >=stealth, mblue, thick] (K) -- (B3);
		
		\foreach \x in {-6.6,-3,0, 3, 6.5}
		{
			\foreach \y in {-3.7, -4, -4.3}
			{
				\fill (\x,\y) circle (0.75pt);
			}
		}

	\end{tikzpicture}
\caption{
	The tree $\Tsc$ with countably infinite $R_1$-, $R_2$-, and $R$-successors,
	where $R_1, R_2 \subseteq R$ and 
	$R_1 \circ R_2,\, R_2 \circ R_1 \subseteq R$.
}
\label{fig:treerm}
\end{figure}

\begin{theorem}
\label{thm:Kripkecompleteness}
    $\TNL$ is sound and complete w.r.t $\Tsc$ and $\DNL$ is sound and complete w.r.t. $\Tim$.
    \end{theorem}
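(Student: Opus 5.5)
Soundness is immediate from Corollary~\ref{prop:first-order}. In $\Tsc$ the relations $R_1,R_2,R$ are reflexive, since $u=\epsilon$ is allowed, and $R_1\circ R_2,\,R_2\circ R_1\subseteq R$ holds by definition of $R$. Hence $\Tsc\vDash\TNL$. In $\Tim$ every node has $R'_1$-, $R'_2$- and $R'$-successors, obtained by appending $(0,1)$, $(0,2)$ and $(0,0)$ respectively, and the compositions are again contained in $R'$ by definition. Hence $\Tim\vDash\DNL$.

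For completeness we follow the pattern of Proposition~\ref{prop:Tsoundcomplete}. Let $\varphi\notin\TNL$. By Proposition~\ref{prop:fmp} and its proof, $\varphi$ is refuted at a world $w_0$ of a finite frame $F=(W',S,S_1,S_2)$ satisfying the conditions of Corollary~\ref{prop:first-order}. We may restrict $F$ to the part generated from $w_0$, because generated subframes preserve both validity and these first-order conditions. It then suffices to build a surjective bounded morphism $f\colon\Tsc\to F$; since $\Log(\Tsc)\subseteq\Log(F)$, this shows $\varphi\notin\Log(\Tsc)$.

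The map $f$ is defined by recursion on the length of sequences. We set $f(\epsilon)=w_0$. If $f(s)=u$, we map the children $s\cdot(i,1)$ cyclically onto $S_1(u)$, the children $s\cdot(i,2)$ cyclically onto $S_2(u)$, and the children $s\cdot(i,0)$ cyclically onto $S(u)$. These successor sets are non-empty by reflexivity, or by seriality in the $\DNL$ case, and they are finite, so the cyclic assignment is well defined and hits every element.

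The forth condition for $R_1$ and $R_2$ holds by construction; for reflexive steps it follows from reflexivity of $S_1$ and $S_2$. For $R$ there are three kinds of $R$-steps to check.
\begin{enumerate}
\item A direct child $s\cdot(i,0)$ of $s$ is mapped into $S(f(s))$ by construction.
\item If $s\,R_1\,q\,R_2\,t$, then $f(s)\,S_1\,f(q)\,S_2\,f(t)$, and therefore $f(s)\,S\,f(t)$ because $S_1\circ S_2\subseteq S$. The case $s\,R_2\,q\,R_1\,t$ is symmetric.
\item The reflexive step $s\,R\,s$ is handled by reflexivity of $S$.
\end{enumerate}
The back condition for each relation is witnessed by the $\{0\}$-, $\{1\}$- and $\{2\}$-children respectively, since every successor of $f(s)$ is hit infinitely often. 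Surjectivity follows because $F$ is generated from $w_0$: every world is reached by a finite path, and each step of that path is realised by the children.

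For $\DNL$ we use $\Tim$ and a serial finite frame in exactly the same way. There is no reflexive step to handle, and the forth condition for $R'$ uses $S_1\circ S_2\cup S_2\circ S_1\subseteq S$ in the same way as above.

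The main point to verify is the forth condition for the composite $R$-edges, which is where $(\mathsf{mix})$ enters through $S_1\circ S_2\cup S_2\circ S_1\subseteq S$. The rest of the argument is routine.
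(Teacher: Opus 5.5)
Your proposal is correct and follows essentially the same route as the paper: use the FMP to obtain a finite rooted $\TNL$- (resp.\ $\DNL$-) frame, map $\Tsc$ (resp.\ $\Tim$) onto it by cyclically sending the $(\cdot,1)$-, $(\cdot,2)$- and $(\cdot,0)$-children to the $S_1$-, $S_2$- and $S$-successors, and use $S_1\circ S_2\cup S_2\circ S_1\subseteq S$ for the forth condition on the composite $R$-edges. Your explicit restriction to the generated subframe and your separate treatment of the reflexive step $s\,R\,s$ spell out two points (rootedness for surjectivity, and the $\epsilon$-case) that the paper leaves implicit.
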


    \begin{proof}
    We start with the proof for $\TNL$.
Soundness is immediate: $\Tsc$ is reflexive with respect to all three relations, and the axiom $\mathsf{(mix)}$ is valid since $R_1\circ R_2 \subseteq R$ and $R_2\circ R_1 \subseteq R$.

For completeness, we use the finite model property of \(\TNL\).
Let \(F=(W',R'',R''_1,R''_2)\) be a finite rooted frame with root \(w\) such that \(F \vDash \TNL\).
We define a bounded morphism \(f\colon \Tsc\to F\) recursively, starting with \(f(\varepsilon)=w\).
Suppose \(f(x)=u\).
Then the \(R_1\)-successors \(x\cdot(m,1)\) and the \(R_2\)-successors \(x\cdot(m,2)\) \((m\in\mathbb N)\) are mapped cyclically onto the sets of \(R''_1\)- and \(R''_2\)-successors of \(u\), respectively. As our logics are serial, the sets of $R_1''$-, $R_2''$-, and $R''$-successors of $u$ are nonempty, so the construction can proceed.
For the \(R\)-successors of \(x\), we first assign only those of the form \(x\cdot(m,0)\), mapping them cyclically onto the \(R''\)-successors of \(u\).
The remaining \(R\)-successors, namely nodes of the form \(x\cdot(m,2)\cdot(m',1)\) and \(x\cdot(m,1)\cdot(m',2)\), are treated in the subsequent step of the construction.
    
%
%

The bounded morphism conditions for $R_1$ and $R_2$ are satisfied by construction, since the $R_i$-successors of any state $x$ are mapped onto the $R_i''$-successors of $f(x)$. It remains to verify the conditions for $R$. Let $x,y \in \Tsc$ with $xRy$.
If $y = x\cdot(m,i)$ for some $m\in\mathbb{N}$ and $i\in\{0,1,2\}$, then $f(y)$ is an $R''$-successor of $f(x)$: for $i=0$ this is immediate, and for $i=1,2$ it follows from $R_i'' \subseteq R''$, using $F \vDash \TNL$. Otherwise, $y$ is of the form $x\cdot(m,2)\cdot(m',1)$ or $x\cdot(m,1)\cdot(m',2)$ for some $m,m'\in\mathbb{N}$. Without loss of generality, assume $y = x\cdot(m,2)\cdot(m',1)$. Then $f(x) R_2'' f(x\cdot(m,2))$ and $f(x\cdot(m,2)) R_1'' f(y)$, and since $R_2'' \circ R_1'' \subseteq R''$, we obtain $f(x) R'' f(y)$.

For the back condition, let $x$ be a state of $\Tsc$ and $t\in W'$ with  $f(x)R''t$. Then, the step in the procedure mapping the $R$-successors of $x$ of the form $x\cdot (m,0)$ alternatingly to the $R''$-successors of $f(x)$ ensures that  $f(x\cdot (m',0))=t$ for some (in fact, infinitely many) $m'\in \mathbb{N}$. Surjectivity of $f$ is immediate from the construction. 

As $f$ is a surjective bounded morphism, it follows that
$\Log(\Tsc) \subseteq \Log(F)$. Since $F \vDash \TNL$ was arbitrary, we obtain $\Log(\Tsc) \subseteq \TNL$, which completes the proof of completeness.

The proof for $\DNL$ works analogously.
\end{proof}

\subsection{Neighborhood frame based on pseudo-infinite extensions of finite sequences}
\label{sec:pseudoinfinite}

To relate the Kripke completeness results from the previous section to neighborhood products, we make use of some auxiliary machinery described in the sequel.

\begin{remark}
	To obtain neighborhood frames $F_1$ and $F_2$ validating $\mathsf{T}$ with
	$\mathrm{Log}(F_1 \times^+_n F_2) = \mathsf{T}_3^m$, we cannot use the
	neighborhood versions of the Kripke frame $T^r_\omega$. Indeed, on full
	products of Kripke frames the interaction axioms (com) and (chr) are valid,
	and this validity is preserved when passing to the induced neighborhood
	frames $N(F)$. To falsify (com) and (chr), we therefore need
	neighborhood frames that genuinely do not arise from Kripke frames; a
	necessary condition for this is that some states do not have a smallest neighborhood.
	We thus introduce a neighborhood frame $N^r_\omega$ in which the neighborhoods
	of each state form an infinite strictly decreasing chain, so that  states do not
	possess a smallest neighborhood. The same construction applies to the
	irreflexive variant $N^i_\omega$, used in the $\mathsf{D}$-case.
\end{remark}

\paragraph*{Pseudo-infinite sequences.}
Based on the tree $\Tr$ whose states are finite sequences of natural numbers, 
we define a neighborhood frame $\Nr$ whose worlds are so-called \emph{pseudo-infinite sequences}. Constructions based on pseudo-infinite sequences are also used in
\cite{Kud2012,KudDS, Kudhorn}.
Formally, we define the set of pseudo-infinite sequences of natural numbers as 
\[
X = \{a_0a_1a_2 \ldots\in (\mathbb{N}\cup \{\star\})^\omega \mid \text{ there is an $N\in \mathbb{N}$ with $a_k=\star$ for all $k\geq N$}\}.
\]
We define the forget function $\forget\colon X\to\mathbb N^*$, which removes all occurrences of $\star$ from a sequence in $X$. 
We equip $X$ with a neighborhood function to obtain the neighborhood frame $\Nr$. 
Let $S$ be the successor relation of the tree $\Tr$, and for $\alpha=a_0a_1a_2\ldots\in X$ define:
 \begin{align*}
            st(\alpha) &= min\{N \mid \forall k \geq N : a_k = \star\} \\
            \alpha \mid_{k} &= a_0a_1...a_k \text{ for $k\in \mathbb{N}$}\\
            U_k(\alpha) &= \{ \beta\in X \mid \forget(\alpha)\,S\;\forget(\beta) \text{ in $\Tr$} \mbox{ and } \alpha \mid_m = \beta \mid_m,  \mbox{ where } m = max(k, st(\alpha))\} 
    \end{align*} 
These sets \(U_k(\alpha)\) define the neighborhoods of the point \(\alpha\in X\).
So, \(\Nr=(X,\tau)\), where the neighborhood function \(\tau\) is defined by \(\tau\colon X\to 2^{2^X}\) and
\[
\tau(\alpha)=\{\,U\subseteq X \mid \exists k\in\mathbb N\ (U_k(\alpha)\subseteq U)\,\}.
\]
Note that for any \(\alpha\in X\) with \(st(\alpha)=n\), we have \(U_n(\alpha)=U_j(\alpha)\) for all \(j\le n\).
Moreover, \(U_k(\alpha)\subseteq U_m(\alpha)\) whenever \(k\ge m\).
Indeed, let \(\beta\in U_k(\alpha)\). Since \(\alpha\mid_k=\beta\mid_k\) and \(k\ge m\), we have \(\alpha\mid_m=\beta\mid_m\), hence \(\beta\in U_m(\alpha)\).
Thus, for each \(\alpha\in X\), each \(U\in\tau(\alpha)\) contains some \(U_k(\alpha)\), and
\(U_0(\alpha)\supseteq U_1(\alpha)\supseteq\cdots\).
Starting at index \(st(\alpha)\), the sequence becomes strictly decreasing.
In particular, \(\mathsf T\) is valid on \(\Nr\): for every \(\alpha\in X\) and every \(U\in\tau(\alpha)\),
we have \(\alpha\in U\), since \(\alpha\in U_k(\alpha)\subseteq U\) for some \(k\).


Analogously, $\Ti$ induces a neighborhood frame $\Ni$ consisting of pseudo-infinite extensions of its nodes. 
As no state has the empty set as a neighborhood, $\mathsf D$ is valid on $\Ni$.
Analogously to \cite[Lemma 5.3]{Kud2012}, we obtain the following result.

\begin{lemma}
	The forget function defines a surjective bounded morphism
	from $\Nr$ to $N(\Tr)$ and from $\Ni$ to $N(\Ti)$.
	Hence $\Log(\Nr)=\mathsf{T}$ and $\Log(\Ni)=\mathsf{D}$.
\end{lemma}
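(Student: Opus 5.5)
The plan is to verify the two bounded-morphism conditions for $f=\forget$ directly from the definition of the sets $U_k(\alpha)$. We then transfer logics using the fact that bounded morphic images can only enlarge the logic, together with Proposition~\ref{prop:Tsoundcomplete}. I treat $\Nr$ and $N(\Tr)$ in detail. The irreflexive case is the same argument with the successor relation $S'$ of $\Ti$ in place of $S$.

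\textbf{Key claim.} For every $\alpha\in X$ and every $k$,
\[
\forget(U_k(\alpha)) = S(\forget(\alpha)).
\]
The inclusion $\subseteq$ is built into the definition of $U_k(\alpha)$.

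For $\supseteq$, let $s=\forget(\alpha)$ and take $t$ with $sSt$, and put $m=\max(k,st(\alpha))$.
\begin{itemize}
\item If $t=s$, then $\beta=\alpha$ works.
\item If $t=s\cdot n$, define $\beta$ by copying $\alpha\mid_m$, placing $n$ at position $m+1$, and putting $\star$ everywhere after that.
\end{itemize}
Since $m\ge st(\alpha)$, all entries of $\alpha$ from position $m$ onward are $\star$. Hence $\forget(\beta)=s\cdot n$ and $\beta\mid_m=\alpha\mid_m$, so $\beta\in U_k(\alpha)$. In the irreflexive case only the second subcase occurs. This index bookkeeping, namely placing the new letter strictly after $\max(k,st(\alpha))$ so that the prefix condition is kept, is the only delicate point.

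\textbf{Forth condition.} If $U\in\tau(\alpha)$, then $U\supseteq U_k(\alpha)$ for some $k$. So $f(U)\supseteq S(\forget(\alpha))$, and therefore $f(U)$ is a neighborhood of $f(\alpha)$ in $N(\Tr)$.

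\textbf{Back condition.} If $V$ is a neighborhood of $f(\alpha)$ in $N(\Tr)$, then $V\supseteq S(\forget(\alpha))$. Take $U=U_0(\alpha)\in\tau(\alpha)$; the key claim gives $f(U)\subseteq V$.

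\textbf{Surjectivity.} Every finite sequence $a_0\cdots a_{n-1}$ equals $\forget(a_0\cdots a_{n-1}\star\star\cdots)$.

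\textbf{Logics.} Since $\Nr\twoheadrightarrow N(\Tr)$, we get
\[
\Log(\Nr)\subseteq \Log(N(\Tr))=\Log(\Tr)=\mathsf T,
\]
where the last equality is Proposition~\ref{prop:Tsoundcomplete}. Conversely, $\mathsf T\subseteq\Log(\Nr)$ because every neighborhood of $\alpha$ contains $\alpha$, as observed before the lemma. Likewise $\Log(\Ni)\subseteq\Log(\Ti)=\mathsf D$, and $\mathsf D\subseteq\Log(\Ni)$ because each $U_k(\alpha)$ is nonempty by the key claim. Hence $\emptyset\notin\tau(\alpha)$.
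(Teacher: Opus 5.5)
Your proof is correct and is essentially the argument the paper intends: the paper gives no proof of its own, only a pointer to the analogous Lemma~5.3 of Kudinov's earlier work, and that argument is this direct check via the identity $\forget(U_k(\alpha)) = S(\forget(\alpha))$, the one-dimensional analogue of the Claim in Lemma~\ref{lem:boundedmorphism}. Your derivation of $\Log(\Nr)=\mathsf{T}$ and $\Log(\Ni)=\mathsf{D}$ is also the intended one: bounded morphic images, $\Log(N(\Tr))=\Log(\Tr)$, and Proposition~\ref{prop:Tsoundcomplete}, together with the soundness observations made before the lemma.
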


\subsection{Surjective bounded morphism from \texorpdfstring{$\Nr\times_n^+\Nr$}{Nr x\_n+ Nr} to \texorpdfstring{$N(\Tsc)$}{N(Tsc)} and from \texorpdfstring{$\Ni\times_n^+\Ni$}{Ni x\_n+ Ni} to \texorpdfstring{$N(\Tim)$}{N(Tim)}}
\label{sec:boundedmorphism}

Now, we aim to show that $\mathsf{T}\times_n^+\mathsf{T}\subseteq \TNL$. We already know that $\TNL=\Log(\Tsc)$.
Further, we can view $\Tsc$ as a neighborhood frame $N(\Tsc)$ (as described in Section \ref{sec:prelim}) without changing the logic.
We also know that $\Nr\vDash \mathsf{T}$ and so $\mathsf{T}\times_n^+\mathsf{T}\subseteq \Log(\Nr\times_n^+\Nr)$.
Finally, we  show $\Log(\Nr\times_n^+\Nr)\subseteq \Log(N(\Tsc))$ by providing a surjective bounded morphism from $\Nr\times_n^+\Nr$ to $N(\Tsc)$.

   Using the domain $X$ and the neighborhood function of $\Nr$ as in the previous section, we write     
   \[
   \Nr \times^+_n \Nr = (X \times X, \tau_1, \tau_2, \tau) \qquad \text{ and } \qquad N(\Tsc) = (W, \sigma_1, \sigma_2, \sigma)
   \]
   where $W=(\mathbb{N}\times\{0,1,2\})^*$ and, for $x\in W$, the sets $\sigma_1(x)$, $\sigma_2(x)$, and $\sigma(x)$
  are the collections of all supersets of the sets of $R_1$-, $R_2$-, and $R$-successors of $x$ in $\Tsc$, respectively.
  
  To define the bounded morphism, we first fix a bijection $h\colon \mathbb{N}\times \mathbb{N}\to \mathbb{N}$.
    Next, we define function $f :(\mathbb{N} \cup \{\star \}) \times (\mathbb{N} \cup \{\star \}) \rightarrow \mathbb{N}\times\{0,1,2\} \cup \{\star\}$ as follows:
    \[
        f(a, b) =
        \begin{cases}
        \star, & \text{if }a=b=\star,\\
        (a,1), & \text{if } b = \star, a\in \mathbb{N}, \\
        (b,2), & \text{if } a = \star, b\in \mathbb{N}, \\
        (h(a, b),0), & \text{otherwise}.
        \end{cases}
    \]
Recall that $X$ is the domain of $\Nr$. We define a function
$g' \colon X \times X \to (\mathbb{N}\times\{0,1,2\}\cup\{\star\})^*\cdot\star^{\omega}$.
For $(\alpha,\beta) \in X \times X$ with $\alpha=a_0a_1\dots$ and $\beta=b_0b_1\dots$, we define
$
g'(\alpha,\beta)=f(a_0,b_0)f(a_1,b_1)\dots .
$
    
    Finally, we define $g(\alpha,\beta)$ as the sequence obtained from $g'(\alpha,\beta)$ by deleting all occurrences of $\star$. 
    For $\alpha=a_0a_1\dots$ and $\beta=b_0b_1\dots$ in $X$, there exists $N\in\mathbb N$ such that $a_k=b_k=\star$ for all $k\ge N$. 
    Hence $g'(\alpha,\beta)$ is eventually constant $\star$, and therefore 
    $g(\alpha,\beta)\in W=(\mathbb N\times\{0,1,2\})^*$.

        \begin{lemma}
        \label{lem:boundedmorphism}
        The function $g : \Nr \times^+_n \Nr \rightarrow N(\Tsc)$ is a surjective bounded morphism.
        \end{lemma}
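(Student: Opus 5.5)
The plan is to verify surjectivity and the two bounded-morphism conditions separately for each of the three pairs $(\tau_1,\sigma_1)$, $(\tau_2,\sigma_2)$, $(\tau,\sigma)$. The reflexive case $\Nr$ comes first; the irreflexive case $\Ni\to N(\Tim)$ is then a modification.

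\emph{Surjectivity.} Given $w=u_1\cdots u_n\in W$, I build $(\alpha,\beta)$ letter by letter, placing the $i$-th letter at position $i-1$:
\begin{itemize}
\item if $u_i=(a,1)$, put $a$ in $\alpha$ and $\star$ in $\beta$;
\item if $u_i=(b,2)$, put $\star$ in $\alpha$ and $b$ in $\beta$;
\item if $u_i=(c,0)$, put $(a,b)=h^{-1}(c)$ into the two sequences.
\end{itemize}
All later positions are $\star$. Then $g(\alpha,\beta)=w$ by definition of $f$.

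\emph{Key lemma.} Let $x=(\alpha,\beta)$ and $k\ge \max(st(\alpha),st(\beta))$. I will compute $g(U_k(\alpha)\times\{\beta\})$, $g(\{\alpha\}\times U_k(\beta))$ and $g(U_k(\alpha)\times U_k(\beta))$ exactly.

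An element $\alpha'\in U_k(\alpha)$ agrees with $\alpha$ up to position $k$. Beyond $k$ it is all $\star$ except for at most one natural number $a$, since $\mathit{forget}(\alpha')=\mathit{forget}(\alpha)$ or $\mathit{forget}(\alpha)\cdot a$. Because $k\ge st(\beta)$, the sequence $\beta$ is all $\star$ beyond $k$. Hence $g(\alpha',\beta)$ is either $g(x)$ or $g(x)\cdot(a,1)$, and every $a$ occurs. So $g(U_k(\alpha)\times\{\beta\})=R_1(g(x))$, and symmetrically for $R_2$.

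For the product neighborhood, $\alpha'$ and $\beta'$ each add at most one letter beyond $k$. The possibilities are:
\begin{itemize}
\item no new letter, giving $g(x)$ itself;
\item only one new letter, giving $\cdot(a,1)$ or $\cdot(b,2)$;
\item both letters at the same position, giving $\cdot(h(a,b),0)$;
\item the letters at different positions, giving $\cdot(a,1)(b,2)$ or $\cdot(b,2)(a,1)$ according to order.
\end{itemize}
Since $h$ is a bijection and positions beyond $k$ can be chosen freely, all of these occur. The image is therefore exactly $R(g(x))$ as defined in $\Tsc$, namely the $\mathbb N\times\{0\}$-successors, $R_1\circ R_2$, $R_2\circ R_1$, and (by reflexivity) $R_1$, $R_2$ and $g(x)$ itself.

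\emph{Choosing $k$ large.} This is the step I expect to need the most care. If $k$ were smaller than $st(\beta)$, inserting a letter into $\alpha'$ at a position where $\beta$ carries $b_j\neq\star$ would turn the letter $(b_j,2)$ into $(h(a,b_j),0)$. That changes $g$ in the middle of the sequence instead of appending to it, so the image could leave $R_1(g(x))$. Since the $U_k$ decrease and each $\tau_i$ is generated by them, it suffices to work with $k\ge \max(st(\alpha),st(\beta))$.

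\emph{Forth and back conditions.} For the forth condition, any $U\in\tau_1(x)$ contains some $U_j(\alpha)\times\{\beta\}$, hence also $U_k(\alpha)\times\{\beta\}$ for large $k$. Therefore $g(U)\supseteq R_1(g(x))$, so $g(U)\in\sigma_1(g(x))$. For the back condition, given $V\in\sigma_1(g(x))$, the set $U_k(\alpha)\times\{\beta\}$ lies in $\tau_1(x)$ and its image $R_1(g(x))$ is contained in $V$. The same argument with the other two image computations handles $\tau_2$ and $\tau$.

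\emph{Irreflexive case.} For $\Ni$, $\mathit{forget}(\alpha')$ must strictly extend $\mathit{forget}(\alpha)$, so exactly one letter is added in each coordinate. Then $g(U_k(\alpha)\times\{\beta\})=R_1'(g(x))$, and similarly for $R_2'$. The product image consists only of the $(h(a,b),0)$-extensions and the two-letter $(a,1)(b,2)$ and $(b,2)(a,1)$-extensions, which is exactly $R'(g(x))$ in $\Tim$. The remaining steps are identical.
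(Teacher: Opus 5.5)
Your proof is correct and follows the paper's argument. Both use the same letter-by-letter surjectivity construction and the same key identity $g(U_k(\alpha)\times U_k(\beta))=R(g(\alpha,\beta))$ for $k$ beyond both stabilization indices, with forth and back read off from it. You also spell out the horizontal and vertical cases (which the paper omits as analogous) and explain why $k$ must be large. Your non-strict bound $k\ge\max(st(\alpha),st(\beta))$ works just as well as the paper's strict one, because $\alpha|_k$ already fixes position $k$, so any new letter lands at a position where both $\alpha$ and $\beta$ are $\star$.
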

        \begin{proof}
        First, we show that $g$ is surjective. Let $\vec z = z_0 z_1 \ldots z_n \in W=(\mathbb{N}\times\{0,1,2\})^*$.
        We define two sequences $\alpha=a_0 a_1 \ldots a_n \star^\omega$ and $\beta=b_0 b_1 \ldots b_n \star^\omega$
        as follows. For $i \le n$, we distinguish the following cases:
        \begin{itemize}
        \item If $z_i = (m,0)$ for some $m\in \mathbb{N}$, we pick $a_i,b_i\in \mathbb{N}$ such that $h(a_i,b_i)=m$.
        \item If $z_i = (m,1)$ for some $m\in \mathbb{N}$, we let $a_i = m$ and $b_i = \star$.
        \item If $z_i = (m,2)$ for some $m\in \mathbb{N}$, we let $b_i = m$ and $a_i = \star$.
        \end{itemize}
	This choice ensures that $g(\alpha,\beta) = \vec{z}$. Hence, g is surjective.

     We now verify the bounded morphism conditions for the neighborhood functions $\tau$ and $\sigma$. 
     The cases of $\tau_i$ and $\sigma_i$ for $i\in\{1,2\}$ are analogous and simpler, as fewer case distinctions are required. 
     To verify both conditions, we use the following claim:
        
                \vspace{6pt}
        \noindent\textbf{Claim: }
        For all $(\alpha, \beta)\in X\times X$ and  $m > \max\{st(\alpha), st(\beta)\}$,
        $R(g(\alpha,\beta)) = g(U_m(\alpha) \times U_m(\beta))$.
        \vspace{6pt}

        $\supseteq$ : Let $m > \max\{st(\alpha),st(\beta)\}$ and let $\zeta \in U_m(\alpha), \upsilon \in U_m(\beta)$.
        We show that $g(\zeta, \upsilon) \in R(g(\alpha,\beta))$. By definition, we need to find a $\vec{c} \in \mathbb{N}\times\{0,1,2\} \cup \{\epsilon\}$
        or $\vec{c} \in (\mathbb{N}\times\{1\}) \cdot (\mathbb{N}\times\{2\})$ or $\vec{c} \in (\mathbb{N}\times\{2\}) \cdot (\mathbb{N}\times\{1\})$  s.t. $g(\alpha,\beta) \cdot \vec{c} = g(\zeta, \upsilon)$.
        By the choice of $m>\max\{st(\alpha),st(\beta)\}$, we have
        $\alpha\mid_m=\zeta\mid_m$ and $\beta\mid_m=\upsilon\mid_m$,
        whence $g(\alpha,\beta)=g(\zeta\mid_m\star^\omega,\upsilon\mid_m\star^\omega)$.
        Furthermore, recalling that \(S\) denotes the successor relation on \(\Tr\), we have
        $
        \forget(\alpha)\;S\;\forget(\zeta)$ and $ \forget(\beta)\;S\;\forget(\upsilon)$.
        That means there exist $\vec d$ and $\vec e$, each of which is either empty ($\epsilon$) or a single letter from $\mathbb N$, such that
        $\forget(\alpha)\cdot \vec d = \forget(\zeta)$ and
        $\forget(\beta)\cdot \vec e = \forget(\upsilon)$. 
       Thus, the extensions $\vec d$ and $\vec e$ occur strictly beyond the prefixes $\zeta\mid_m$ and $\upsilon\mid_m$, respectively.
        Let $\zeta'$ and $\upsilon'$ denote the words after $\zeta \mid_m$ and $\upsilon \mid_m$, i.e., they are of the form: $\zeta' = \star^*\vec{d}\star^{\omega}$ and $\upsilon' = \star^* \vec{e}\star^{\omega}$. Now there are six cases :  
        \[
        \begin{cases}
        	g(\zeta', \upsilon') \in \mathbb{N}\times\{1\}, & \text{if } \vec{d} \neq \varepsilon \text{ and } \vec{e} = \varepsilon, \\
        	g(\zeta', \upsilon') \in \mathbb{N}\times\{2\}, & \text{if } \vec{d} = \varepsilon \text{ and } \vec{e} \neq \varepsilon, \\
        	g(\zeta', \upsilon') = \varepsilon, & \text{if } \vec{d} = \vec{e} = \varepsilon, \\
        	g(\zeta', \upsilon') \in \mathbb{N}\times\{0\}, & \text{if } \vec{d}, \vec{e} \neq \varepsilon \text{ occur at the same index in } \zeta', \upsilon', \\
        	g(\zeta', \upsilon') \in (\mathbb{N}\times\{1\}) \cdot (\mathbb{N}\times\{2\}), & \text{if } \vec{d}, \vec{e} \neq \varepsilon \text{ and } \vec{d} \text{ occurs in } \zeta' \text{ before } \vec{e} \text{ in } \upsilon', \\
        	g(\zeta', \upsilon') \in (\mathbb{N}\times\{2\}) \cdot (\mathbb{N}\times\{1\}), & \text{if } \vec{d}, \vec{e} \neq \varepsilon \text{ and } \vec{d} \text{ occurs in } \zeta' \text{ after } \vec{e} \text{ in } \upsilon'.
        \end{cases}
        \]

            It follows that $g(\zeta,\upsilon) = g(\zeta \mid_m \star^\omega, \upsilon \mid_m \star^\omega) \cdot g(\zeta', \upsilon') = g(\alpha, \beta) \cdot g(\zeta', \upsilon')$. Hence, $g(\zeta, \upsilon) \in R(g(\alpha,\beta))$. 
            \vspace{1pt}

            $\subseteq$ :  Let $\vec{w} \in R(g(\alpha,\beta))$ and  $m> max\{st(\alpha),st(\beta)\}$. We will construct $\zeta \in U_m(\alpha)$ and $\upsilon \in U_m(\beta)$ s.t. $g(\zeta,\upsilon) = \vec{w}$.
            By definition there exists $\vec{c} \in \mathbb{N}\times\{0,1,2\}\cup \{\epsilon\}$ or $\vec{c} \in (\mathbb{N}\times\{1\}) \cdot (\mathbb{N}\times\{2\})$ 
            or $\vec{c} \in (\mathbb{N}\times\{2\}) \cdot (\mathbb{N}\times\{1\})$  such that $g(\alpha,\beta) \cdot \vec{c} = \vec{w}$.
            We build $\zeta = \alpha \mid_m \cdot \vec{d} \, \star^\omega$ and $\upsilon = \beta \mid_m \cdot \vec{e} \, \star^\omega$ where $\vec{d}$ and $\vec{e}$ depend on the following cases:
        \[
            \begin{cases}
            \vec{d} = \vec{c},\; \vec{e} = \star, & \text{if }  \vec{c }\in \mathbb{N}\times\{1\} \\
            \vec{d} = \star,\; \vec{e} = \vec{c}, & \text{if }  \vec{c }\in \mathbb{N}\times\{2\} \\
            \vec{d} = \vec{e} = \star,& \text{if } \vec{c } = \epsilon \\
            \vec{d} = a,\; \vec{e} = b, & \text{if } \vec{c } \in \mathbb{N}\times \{0\} \text{ and } h(a,b) = \vec{c} \\
            \vec{d} = a,\; \vec{e} = \star b, & \text{if } \vec{c } \in (\mathbb{N}\times\{1\}) \cdot (\mathbb{N}\times\{2\}) \text{ with } \vec{c} = a \cdot b \\
            \vec{d} = \star a,\; \vec{e} = b, & \text{if } \vec{c } \in (\mathbb{N}\times\{2\}) \cdot (\mathbb{N}\times\{1\}) \text{ with } \vec{c} = b \cdot a 
            \end{cases}
        \] 
        We have $\zeta \in U_m(\alpha)$ and $\upsilon \in U_m(\beta)$. 
        Further,  $g(\zeta, \upsilon)=g(\zeta\mid_m, \upsilon\mid_m) \cdot g(\vec{d}\star^\omega,\vec{e}\star^\omega) = g(\alpha,\beta) \cdot \vec{c} = \vec{w}$.
        
        To verify the bounded morphism condition, let $\alpha,\beta\in X$ and $U\in\tau(\alpha,\beta)$.
        We need to find a $V\in\sigma(g(\alpha,\beta))$ with $V=g(U)$.
        Choose $m>\max\{st(\alpha),st(\beta)\}$ with
        $U_m(\alpha)\times U_m(\beta)\subseteq U$.
        Then, $        
        R(g(\alpha,\beta))
        = g\big(U_m(\alpha)\times U_m(\beta)\big)
        \subseteq g(U)$.
        
        For the second condition, let $\alpha,\beta\in X$ and
        $V\in\sigma(g(\alpha,\beta))$, that is,
        $R(g(\alpha,\beta))\subseteq V$.
        We need to find $U\in\tau(\alpha,\beta)$ with
        $g(U)\subseteq V$.
        Choose $m>\max\{st(\alpha),st(\beta)\}$ and set
        $U:=U_m(\alpha)\times U_m(\beta)$.
        Then, $
        g(U)
        = g\big(U_m(\alpha)\times U_m(\beta)\big)
        = R(g(\alpha,\beta))
        \subseteq V$.
        
     The arguments for $\tau_1,\sigma_1$ and $\tau_2,\sigma_2$ are analogous and simpler, and are therefore omitted.
        \end{proof}

Now, we are in the position to complete the proof of our first main result.

\begin{theorem}
	
We have $\mathsf{T}\times_n^+ \mathsf{T} = \TNL$.

\end{theorem}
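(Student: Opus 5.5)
The equality splits into two inclusions, and both are essentially already available; the proof assembles them.

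The inclusion $\mathsf{T}\times_n^+ \mathsf{T}\supseteq \TNL$ is exactly Proposition~\ref{prop:lowerbound}, so nothing further is needed there.

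For $\mathsf{T}\times_n^+ \mathsf{T}\subseteq \TNL$, I would chain inclusions of logics. First, $\Nr$ validates $\mathsf{T}$: every neighborhood of $\alpha$ contains some $U_k(\alpha)\ni\alpha$, as observed in Section~\ref{sec:pseudoinfinite}. Hence $\Nr\times_n^+\Nr$ belongs to the class of full products whose logic defines $\mathsf{T}\times_n^+\mathsf{T}$, and so
\[
\mathsf{T}\times_n^+ \mathsf{T}\subseteq \Log(\Nr\times_n^+\Nr).
\]
Second, Lemma~\ref{lem:boundedmorphism} provides a surjective bounded morphism $g$ from $\Nr\times_n^+\Nr$ onto $N(\Tsc)$. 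Since logics are preserved under surjective bounded morphic images, this yields
\[
\Log(\Nr\times_n^+\Nr)\subseteq\Log(N(\Tsc)).
\]
Third, $\Log(N(\Tsc))=\Log(\Tsc)$ by the standard correspondence between a Kripke frame $F$ and $N(F)$ recalled in the preliminaries. Finally, $\Log(\Tsc)=\TNL$ by Theorem~\ref{thm:Kripkecompleteness}. Composing these steps gives $\mathsf{T}\times_n^+ \mathsf{T}\subseteq\TNL$.

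The main point to check is that Lemma~\ref{lem:boundedmorphism} really covers all three neighborhood functions. Its proof treats $\tau,\sigma$ in detail and calls the cases $\tau_i,\sigma_i$ analogous. For $\tau_1$, the horizontal neighborhood of $(\alpha,\beta)$ is $U_m(\alpha)\times\{\beta\}$. Since $\beta$ ends in $\star^\omega$ beyond position $m$, the image under $g$ consists of $g(\alpha,\beta)$ extended by $\epsilon$ or by a single letter in $\mathbb{N}\times\{1\}$. This is exactly $R_1(g(\alpha,\beta))$, and both bounded morphism conditions then follow verbatim. I would just cite the lemma for this.
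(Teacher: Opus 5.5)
Your proposal is correct and follows the paper's proof exactly: the lower bound is Proposition~\ref{prop:lowerbound}, and the upper bound is the chain $\mathsf{T}\times_n^+\mathsf{T}\subseteq\Log(\Nr\times_n^+\Nr)\subseteq\Log(N(\Tsc))=\Log(\Tsc)=\TNL$, using Lemma~\ref{lem:boundedmorphism} and Theorem~\ref{thm:Kripkecompleteness}. Your sketch of the horizontal case, showing $g(U_m(\alpha)\times\{\beta\})=R_1(g(\alpha,\beta))$ for $m>\max\{st(\alpha),st(\beta)\}$, is a correct filling-in of the case the paper omits as analogous.
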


\begin{proof}
The inclusion $\mathsf{T}\times_n^+\mathsf{T}\supseteq\TNL$ was established in Proposition~\ref{prop:lowerbound}. For the converse inclusion, since $\mathsf{T}\subseteq\Log(\Nr)$, we obtain $\mathsf{T}\times_n^+\mathsf{T}\subseteq\Log(\Nr\times_n^+\Nr)$. By Lemma~\ref{lem:boundedmorphism}, $\Log(\Nr\times_n^+\Nr)\subseteq\Log(N(\Tsc))$, and as observed earlier, passing from $\Tsc$ to $N(\Tsc)$ does not change the logic. Hence $\mathsf{T}\times_n^+\mathsf{T}\subseteq\Log(\Tsc)=\TNL$ by Theorem~\ref{thm:Kripkecompleteness}.
\end{proof}

To establish the analogous result for $\mathsf{D}\times_n^+ \mathsf{D}$, we reason analogously:

        \begin{lemma}
            $g : \Ni \times^+_n \Ni \rightarrow N(\Tim)$ is a bounded morphism where $g$ is as defined before.
                    \end{lemma}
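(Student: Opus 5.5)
We will mirror the proof of Lemma~\ref{lem:boundedmorphism}, now using the irreflexive successor relation $S'$ of $\Ti$ in the definition of the sets $U_k(\alpha)$ for $\Ni$. Surjectivity carries over verbatim: for $\vec z\in W$ we pick $\alpha,\beta$ letter by letter as before, and $g(\alpha,\beta)=\vec z$. The whole argument again reduces to a claim identifying images of basic neighborhoods with successor sets in $\Tim$. For $m>\max\{st(\alpha),st(\beta)\}$ we expect
\[
R'(g(\alpha,\beta)) = g\big(U_m(\alpha)\times U_m(\beta)\big),\qquad
R'_1(g(\alpha,\beta)) = g\big(U_m(\alpha)\times\{\beta\}\big),
\]
and symmetrically $R'_2(g(\alpha,\beta)) = g(\{\alpha\}\times U_m(\beta))$.

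For the product case we write each $\zeta\in U_m(\alpha)$ and $\upsilon\in U_m(\beta)$ as $\zeta=\alpha|_m\cdot\zeta'$ and $\upsilon=\beta|_m\cdot\upsilon'$. Irreflexivity of $S'$ forces $\forget(\zeta)=\forget(\alpha)\cdot d$ and $\forget(\upsilon)=\forget(\beta)\cdot e$ with single letters $d,e\in\mathbb N$, both nonempty. The case analysis of the earlier claim therefore collapses to its last three cases. The extension $g(\zeta',\upsilon')$ lies in $\mathbb N\times\{0\}$ if $d$ and $e$ sit at the same index, and otherwise it lies in $(\mathbb N\times\{1\})\cdot(\mathbb N\times\{2\})$ or $(\mathbb N\times\{2\})\cdot(\mathbb N\times\{1\})$. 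These are exactly the $R'$-successors in $\Tim$, where $\epsilon$ and single $1$/$2$-letters are now excluded. This matches the definition of $R'$, which is precisely why $R'_1,R'_2\not\subseteq R'$. For the converse inclusion we use the same three constructions $\vec d,\vec e$ as before.

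For the horizontal case the second coordinate is fixed at $\beta$, so $g(\zeta,\beta)=g(\alpha,\beta)\cdot(d,1)$ where $d\neq\epsilon$. This gives exactly $R'_1$. Conversely, for $c=(d,1)$ we set $\zeta=\alpha|_m\, d\,\star^\omega$. The vertical case is symmetric.

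Given these identities, the two bounded morphism conditions follow exactly as in Lemma~\ref{lem:boundedmorphism}. Every $U\in\tau(\alpha,\beta)$ contains some $U_m(\alpha)\times U_m(\beta)$ with $m$ large, so $g(U)\supseteq R'(g(\alpha,\beta))$ and hence $g(U)\in\sigma$. Conversely, every $V\in\sigma(g(\alpha,\beta))$ contains $g(U_m(\alpha)\times U_m(\beta))$. The horizontal and vertical functions are handled the same way, using that $\tau_1(\alpha,\beta)$ is generated by the sets $U_m(\alpha)\times\{\beta\}$.

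The main obstacle is the exhaustive case check in the $\supseteq$ direction of the product identity. We must ensure that no pair $(\zeta,\upsilon)$ yields an extension outside the three allowed forms, for instance an empty extension or a single $1$-letter. This is exactly where irreflexivity of $S'$ (both $d$ and $e$ are nonempty) is used. We must also confirm that the letters $d,e$ occur strictly after position $m$, which holds because $m>st$.
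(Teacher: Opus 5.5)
Your proposal is correct and follows essentially the paper's argument. It uses the same surjectivity witness and the same key identity $R'(g(\alpha,\beta)) = g(U_m(\alpha)\times U_m(\beta))$ for $m>\max\{st(\alpha),st(\beta)\}$, with the cases without a successor letter dropped because $S'$ is irreflexive, and the same three constructions for the converse inclusion. You also spell out the horizontal and vertical identities (where $m>st(\beta)$, resp.\ $m>st(\alpha)$, keeps the new letter aligned with $\star$'s), which the paper leaves implicit.
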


            \begin{proof}
                Let $\alpha = a_0a_1\dots a_n\star^\omega$ and $\beta = b_0b_1\dots b_l\star^\omega$.
                Surjectivity follows as in the previous lemma. \newline
                We claim that
                $
                R(g(\alpha,\beta)) = g(U_m(\alpha) \times U_m(\beta))
                $
                for some $m > \max\{st(\alpha), st(\beta)\}$.
                This identity is used to verify the forth and back conditions for $\tau$ and $\sigma$.
                Fix such an $m > \max\{st(\alpha), st(\beta)\}$.
                \newline
                $\supseteq$: Assume $\zeta \in U_m(\alpha)$ and $\upsilon \in U_m(\beta)$. To show that $g(\zeta,\upsilon) \in R(g(\alpha,\beta))$, it suffices to find 
                $\vec{c} \in (\mathbb{N}\times\{0\}) \;\cup\; 
                ((\mathbb{N}\times\{1\})\cdot(\mathbb{N}\times\{2\})) \;\cup\;
                ((\mathbb{N}\times\{2\})\cdot(\mathbb{N}\times\{1\}))$, s.t.
                $g(\alpha,\beta) \cdot \vec{c} = g(\zeta,\upsilon)$. Since $m>\max\{st(\alpha),st(\beta)\}$, we have 
                $g(\alpha,\beta)=g(\zeta\mid_m,\upsilon\mid_m)$. Moreover, $\forget(\alpha) \,S\, \forget(\zeta)$ and $\forget(\beta) \,S\, \forget(\upsilon)$,
                where $S$ is the successor relation on $\Ti$.
                Thus there exist ${d},
                {e} \in \mathbb{N}$ such that
                $\forget(\alpha)\cdot{d}=\forget(\zeta)$ and
                $\forget(\beta)\cdot{e}=\forget(\upsilon)$.
                Hence ${d}$ and ${e}$ occur strictly after
                $\zeta\mid_m$ and $\upsilon\mid_m$.
                We denote $\zeta'$ and $\upsilon'$ to be the words after $\zeta\mid_m$ and $\upsilon\mid_m$ i.e. $\zeta' = \star^* {d} \star^\omega$ and $\upsilon' = \star^* {e} \star^\omega$.
                Then we have three cases :

         \[
            \begin{cases}
            g(\zeta', \upsilon') \in \mathbb{N}\times\{0\}, & {d} \text{ and } {e} \text{ are aligned in $\zeta'$ and $\upsilon'$} \\
            g(\zeta', \upsilon') \in (\mathbb{N}\times\{1\})\cdot(\mathbb{N}\times\{2\}), &  {d} \text{ has a lower word position  in $\zeta'$ than } {e} \text{in $\upsilon'$}\\
            g(\zeta', \upsilon') \in (\mathbb{N}\times\{2\})\cdot(\mathbb{N}\times\{1\}), &  {d}  \text{ has a higher word position  in $\zeta'$ than } {e} \text{in $\upsilon'$}
            \end{cases}
        \] 
        That means $g(\zeta,\upsilon)\in R(g(\alpha,\beta))$.\newline
        $\subseteq$: Let $\vec{w} \in R(g(\alpha,\beta))$. We construct $\zeta \in U_m(\alpha)$ and $\upsilon \in U_m(\beta)$ such that $g(\zeta,\upsilon)=\vec{w}$. By definition of $R$, there exists 
        $\vec{c} \in (\mathbb{N}\times\{0\}) \cup ((\mathbb{N}\times\{1\})\cdot(\mathbb{N}\times\{2\})) \cup ((\mathbb{N}\times\{2\})\cdot(\mathbb{N}\times\{1\}))$
        such that $g(\alpha,\beta)\cdot\vec{c}=\vec{w}$. Let $\zeta=\alpha\mid_m\cdot \vec{d}\star^\omega$ and $\upsilon=\beta\mid_m\cdot \vec{e}\star^\omega$, where $\vec{d}$ and $\vec{e}$ are chosen according to the following cases:
          \[
            \begin{cases}
            \vec{d} = a, \vec{e} = b & \text{if } \vec{c } \in \mathbb{N}\times\{0\} \text{ and } h(a, b) = \vec{c} \\
            \vec{d} = a, \vec{e} = \star b & \text{if } \vec{c } \in (\mathbb{N}\times\{1\}) \cdot (\mathbb{N}\times\{2\}) \text{ with } \vec{c} = ab \\
            \vec{d} = \star a, \vec{e} = b & \text{if } \vec{c } \in (\mathbb{N}\times\{2\}) \cdot (\mathbb{N}\times\{1\}) \text{ with } \vec{c} = b \cdot a 
            \end{cases}
        \] 
        Thus $\zeta \in U_m(\alpha)$ and $\upsilon \in U_m(\beta)$, and
        $g(\zeta,\upsilon)=g(\zeta\mid_m\star^\omega,\upsilon\mid_m\star^\omega)\cdot g(\vec d\star^\omega,\vec e\star^\omega)
        = g(\alpha,\beta)\cdot \vec c=\vec w$.
            \end{proof}

With the reasoning as above, this allows us to conclude the second main result:
\begin{theorem}
	
We have $\mathsf{D}\times_n^+ \mathsf{D} = \DNL$.

\end{theorem}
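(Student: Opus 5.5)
The plan mirrors the proof of $\mathsf{T}\times_n^+\mathsf{T}=\TNL$, splitting the equality into two inclusions.

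\emph{Lower bound.} The inclusion $\mathsf{D}\times_n^+\mathsf{D}\supseteq\DNL$ is exactly the second half of Proposition~\ref{prop:lowerbound}. On every full product of $\mathsf{D}$-frames, the seriality axiom holds for $\Box$, $\Box_1$ and $\Box_2$, since no neighborhood function contains $\emptyset$; moreover, $(\mathsf{mix})$ holds on all full products.

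\emph{Upper bound.} First I will observe that $\Ni\vDash\mathsf{D}$, because no state of $\Ni$ has the empty set as a neighborhood. Hence $\Ni\times_n^+\Ni$ is one of the frames in the defining class, and
\[
\mathsf{D}\times_n^+\mathsf{D}\subseteq\Log(\Ni\times_n^+\Ni).
\]
Next, the preceding lemma gives a bounded morphism $g\colon\Ni\times_n^+\Ni\to N(\Tim)$. Its surjectivity argument is the same as in Lemma~\ref{lem:boundedmorphism}: every $\vec z\in W$ is realised by a pair $(\alpha,\beta)$ built letter by letter. It follows that
\[
\Log(\Ni\times_n^+\Ni)\subseteq\Log(N(\Tim)).
\]
Finally, $\Log(N(\Tim))=\Log(\Tim)$ by the general fact $\Log(F)=\Log(N(F))$, and $\Log(\Tim)=\DNL$ by Theorem~\ref{thm:Kripkecompleteness}. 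Chaining these inclusions yields $\mathsf{D}\times_n^+\mathsf{D}\subseteq\DNL$.

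\emph{Main obstacle.} The one point needing care is that the preceding lemma only spells out the $\tau$/$\sigma$ case. For the irreflexive setting I would also check the $\tau_1$/$\sigma_1$ and $\tau_2$/$\sigma_2$ clauses, via the identity
\[
R_1'(g(\alpha,\beta))=g\bigl(\tau_1'\text{-basic set } U_m(\alpha)\times\{\beta\}\bigr)\qquad\text{for } m>\max\{st(\alpha),st(\beta)\}.
\]
The argument runs as follows. Any $\zeta\in U_m(\alpha)$ has the form $\alpha\mid_m\cdot\star^*d\star^\omega$ with $d\in\mathbb N$, because in $\Ti$ the successor step is mandatory. Paired with $\beta$, which is all $\star$ beyond $m$, this gives
\[
g(\zeta,\beta)=g(\alpha,\beta)\cdot(d,1).
\]
Conversely, each $(d,1)$ is realised by $\zeta=\alpha\mid_m\cdot d\star^\omega$. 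The vertical case is symmetric.

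With these identities, the forth and back conditions follow exactly as in the $\tau$ case, which completes the argument.
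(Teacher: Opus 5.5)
Your proposal is correct and follows the paper's argument: the lower bound comes from Proposition~\ref{prop:lowerbound}, and the upper bound from the chain $\mathsf{D}\times_n^+\mathsf{D}\subseteq\Log(\Ni\times_n^+\Ni)\subseteq\Log(N(\Tim))=\Log(\Tim)=\DNL$, using the surjective bounded morphism $g$ and Theorem~\ref{thm:Kripkecompleteness}. Your explicit check of the horizontal and vertical clauses is sound and fills in the step the paper only calls analogous; it rests on the identity $R_1'(g(\alpha,\beta))=g(U_m(\alpha)\times\{\beta\})$ for $m>\max\{st(\alpha),st(\beta)\}$, which correctly uses that the successor step in $\Ti$ is mandatory.
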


\section{Conclusion}
We identified the tri-modal logics for full products of neighborhood frames:
$\mathsf{T}\times_n^+ \mathsf{T} = \TNL$ and
$\mathsf{D}\times_n^+ \mathsf{D} = \DNL$.
We also showed that these logics possess the finite model property  and are therefore decidable.
A natural direction for future work is to study the products
$\mathsf{L_1}\times_n^+\mathsf{L_2}$ for
$L_1,L_2 \in \{\mathsf{K},\mathsf{K4}\}$.
Since these logics are not serial, the techniques used in this paper do not directly apply.

\bibliographystyle{plainurl}    
\bibliography{references}

%
%
%

\end{document}